\newtheorem{Theorem}{Theorem}[section]
\newtheorem{Definition}[Theorem]{Definition}
\newtheorem{Lemma}[Theorem]{Lemma}
\newtheorem{Cor}[Theorem]{Corollary}
\newtheorem{Assumption}[Theorem]{Assumption}
\renewcommand{\Vec}[1]{\mbox{\boldmath $#1$}}
\newcommand{\Prec}{\mbox{\rm Prc}}
\newcommand{\Merg}{\delta_v}
\newcommand{\MergP}{\delta_{v'}}
\newcommand{\MergH}{\delta_{\frac{1}{2}}}
\newcommand{\vHalf}{v_{\frac{1}{2}}}
\newcommand{\Exp}{\mathbb E}
\newcommand{\Prob}{\mbox{\rm Pr}}
\newcommand{\BG}{\mbox{\rm BG}}
\newcommand{\ReL}{\mbox{\rm ReL}}
\newcommand{\WVG}{\mbox{\rm WVG}}
\newcommand{\PrmL}[2]{\Pi_{#1}^{* #2}}
\algnewcommand\Input{\item[\textbf{Input:}]} 
\algnewcommand\Output{\item[\textbf{Output:}]}
\theoremstyle{thmstyleone}%
\theoremstyle{thmstyletwo}%
\theoremstyle{thmstylethree}%
\begin{document}

\title[Article Titl]{On Computing the Shapley Value \\ in Bankruptcy Games\\ \large— Illustrated by Rectified Linear Function Games —}


\author*[1]{\fnm{Shunta} \sur{Yamazaki}}\email{yamazaki.s.ap@m.titech.ac.jp}

\author[1]{\fnm{Tomomi} \sur{Matsui}}\email{matsui.t.af@m.titech.ac.jp}

\affil[1]{\orgdiv{Department of Industrial Engineering and Economics}, \\
\orgname{Institute of Science Tokyo}, 
\orgaddress{\street{Meguro-ku}, \city{Tokyo}, 
\postcode{152-8552}, 
\country{Japan}}}


\abstract{
In this research, 
    we discuss a problem 
    of calculating the Shapley value 
    in bankruptcy games.
We show that 
    the decision problem 
    of computing the Shapley value 
    in bankruptcy games is NP-complete.
We also investigate 
    the relationship 
    between the Shapley value of bankruptcy games 
    and the Shapley–Shubik index 
    in weighted voting games.
The relation naturally implies 
    a dynamic programming technique 
    for calculating the Shapley value.  
We also present two recursive algorithms 
    for computing the Shapley value: 
    the first is the recursive completion method 
    originally proposed by O'Neill, 
    and the second is our novel contribution 
    based on the dual game formulation. 
These recursive approaches offer conceptual clarity 
    and computational efficiency, 
    especially when combined with memoisation technique.    
Finally, we propose 
    a Fully Polynomial-Time Randomized Approximation Scheme 
    (FPRAS) based on Monte Carlo sampling, 
    providing an efficient approximation method 
    for large-scale instances.
}


\keywords{bankruptcy game, Shapley value, NP-completeness, Monte Carlo method}

\maketitle

\vspace{-5ex}

\begin{center}
\large
\date{\today}
\end{center}

\medskip

\section{Introduction}

A bankruptcy problem is a situation 
    in which the estate of a debtor must be divided 
    among several claimants, 
    while the estate is insufficient to satisfy all claims. 
The central question is how to allocate 
    the estate fairly among the claimants.
A typical example occurs 
    when a firm goes bankrupt and its liquidation value 
    must be shared among creditors, 
    each holding a specific claim. 

Research on bankruptcy problems has been connected 
    with cooperative game theory.
This connection was formally established 
    in pioneering studies 
    by O'Neill~\cite{o1982problem} 
    and by Aumann and Maschler~\cite{aumann1985game}.
The cooperative games arising from bankruptcy problems 
    are commonly referred 
    to as \textbf{bankruptcy games}. 
For more details about bankruptcy games,
    see~\cite{driessen1988cooperative,thomson2015axiomatic}.

Several applications of bankruptcy games 
    to real-world problems have been studied.
For instance, 
    situations resembling bankruptcy arise 
    in the case of corporate insolvency, 
    such as the bankruptcy 
    of Pacific Gas and Electric Company (PG\&E), 
    where the division of assets among creditors 
    can be studied 
    as a bankruptcy problem~\cite{borm2005constrained}.
Similar ideas appear 
    in the allocation of scarce natural resources, 
    such as water distribution under scarcity, 
    where bankruptcy rules provide equitable 
    and efficient 
    allocation principles~\cite{zheng2022water}.
Another example can be found 
    in the field of education management, 
    where extended bankruptcy models 
    have been applied to the allocation 
    of university budgets across faculties, 
    balancing subjective claims 
    with objective 
    entitlements~\cite{pulido2002game}.

In this paper, we focus 
    on the computational aspects 
    of the Shapley value 
    in bankruptcy games 
    and make the following contributions:
\begin{enumerate}
\item 
    We show that a decision problem of computing the Shapley value in bankruptcy games is NP-complete (Section~\ref{section3}).
\item 
    We investigate the relationship 
    between the Shapley value in bankruptcy games 
    and the Shapley–Shubik index in weighted voting games. 
    This connection naturally leads to an algorithm for computing the Shapley value, 
    based on the dynamic programming technique 
    used for evaluating 
    the Shapley–Shubik index. 
    Our result yields a refinement in the time complexity 
    of a known algorithm.~(see Section~\ref{section4}).
\item 
    We present two recursive algorithms 
    for computing the Shapley value in bankruptcy games.
    We first describe the recursive completion method, 
    originally proposed by O'Neill, 
    which builds values from smaller games obtained by restriction.
    We also discuss the time complexity of the method.
    The second is a novel algorithm proposed by the authors,
    based on the dual game formulation.  (Section~\ref{section:recursive}).
\item 
    We propose 
    a Fully Polynomial-Time Randomized Approximation Scheme 
    (FPRAS) based on Monte Carlo sampling, 
    providing an efficient approximation method 
    for large-scale instances. (Section~\ref{section5}).
\end{enumerate}


\section{Preliminaries}\label{section2}

This section describes notations, definitions, and 
 some known properties.

\subsection*{\bf Bankruptcy Game}

\begin{Definition}[Bankruptcy Games]\label{Definition:bankruptcy game}
Let $N=\{1,2,\dots,n\}$ be a set of players (claimants), 
    and a positive rational number $w_i$ denotes 
    the claim of player $i\in N$.
Let an estate be a positive rational number $E$
    which satisfies $0< E < \sum_{i\in N} w_i$.
The bankruptcy game,
    denoted by $\BG [E; w_1, w_2,\ldots ,w_n]$,
    is the cooperative game $(N,v)$ with a characteristic function
\[
v: S \mapsto \max \left\{ 0,\, E - \sum_{i \in N \setminus S} w_i \right\} \quad (\forall S \subseteq N).
\]
\end{Definition}

\noindent
Throughout this paper, 
we denote the total weight 
$\sum_{i \in N} w_i$ by $W.$ 
For each subset $S \subseteq N$, 
    we define $w(S)=\sum_{i\in S} w_i.$

%

The supermodularity of the above characteristic function $v$
    is established in~\cite{aumann1985game,curiel1987bankruptcy}.
Thus, every bankruptcy game is a convex game 
and has a non-empty core including the Shapley value (defined below).

The characteristic function $v$ of a bankruptcy game $(N,v)$ as given in Definition~\ref{Definition:bankruptcy game} can also be expressed in the following form:
\[
    v(S) = \max \left\{ 0,\, E - \sum_{i \in N \setminus S} w_i \right\} 
= \max \left\{0,\, -w_0 + \sum_{i \in S} w_i \right\} 
\mbox{ where }
w_0 := W - E. 
\]

\noindent
We refer to a game $(N,v)$ as a \emph{Rectified Linear Function Game}, denoted by 
$\ReL [w_0; w_1, w_2, \ldots ,w_n]$, 
when its characteristic function 
 is expressed as
\begin{equation}
    v: S \mapsto  \max\left\{0,\,-w_0 + \sum_{i \in S} w_i \right\}  \quad (\forall S \subseteq N),
    \label{def:Rel}
\end{equation}
where the inequalities $0< w_0 < W$ are satisfied.
Given a vector of positive rational numbers 
 $\Vec{w}=(w_1, w_2,\ldots ,w_n)$ and a pair of positive rational numbers $\{E, w_0\}$, 
we say that the bankruptcy game $\BG[E;\Vec{w}]$ is equivalent to 
the rectified linear function game $\ReL[w_0; \Vec{w}]$ if and only if $E+w_0=W.$ 

Rectified linear function games can also be applied to model various real-world situations.
For example, consider a set of players $N$ cooperating to host an event.
Each player $i\in N$ knows their expected revenue $w_i$ from participating in the event.
Additionally, there is a cost (e.g., venue fee) $w_0$ required to host the event.
A coalition $S\subseteq N$ can host the event if and only if
the total revenue $\sum_{i\in S}w_i$ is greater than or equal to $w_0$.
Therefore,
when coalition $S$ hosts the event,
the benefit obtained by the coalition is the total revenue of players in $S$ minus $w_0$,
i.e., $\sum_{i \in S}w_i - w_0 \; (\geq 0)$.
On the other hand, when coalition $S$ cannot host the event 
($\sum_{i\in S}w_i < w_0$),
the benefit obtained by the coalition is equal to zero.
If we describe the above situation as a characteristic function form game,
    we obtain a rectified linear function game 
    in a straightforward manner. 
In this paper, we shall exploit this equivalence and present some proofs and technical arguments in terms of rectified linear function games, since their formulation is more convenient for analysis.

\subsection*{Shapley Value}

The Shapley value, formally defined below, is a payoff vector introduced by Shapley~\cite{shapley1953value}.

\begin{Definition}[Shapley value]\label{shapleyvalue}
A {\em permutation}  $\pi$  of players in $N$ 
 is a bijection 
    $\pi: \{1,2,\ldots ,n\} \rightarrow N$, 
    and we say that 
   $\pi (j)$ denotes a player at the position $j$ in the permutation $\pi$.
Let $\Pi_N$ be the set of all permutations defined on $N$.
 The Shapley value of characteristic function form game $(N,v)$ is a payoff vector    
    $\phi=(\phi_1, \phi_2, \ldots , \phi_n)$ defined by
\[
  \phi_i=\frac{1}{n!} \sum_{\pi \in  \Pi_N} \Merg (\pi, i) 
    \;\; (\forall i \in N)
\]
where $\Merg(\pi, i)=v(\Prec (\pi,i) \cup \{i\})-v(\Prec (\pi,i))$, 
 called the marginal contribution of player $i$ in permutation $\pi$, where
$\Prec (\pi, i)$ is the set of players in $N$ which precede $i$ in the permutation $\pi$. 
\end{Definition}

The following lemma captures a fundamental property of the Shapley value in bankruptcy games.

\begin{Lemma} \label{Theorem:w_0Monotone}
Let $(N,v)$ and $(N,v')$ be a pair of bankruptcy games defined by 
$\BG [E; \Vec{w}]$ and $\BG [E'; \Vec{w}]$ satisfying $E \leq E'$.
Then, the inequality $\Merg (\pi, i) \leq \MergP (\pi, i)$ holds for each 
$(\pi, i) \in \Pi_N \times N$.
The Shapley values  $\phi$ and $\phi'$ defined by 
$(N, v)$ and $(N, v')$ satisfy that 
$\phi_i \leq \phi'_i$ $(\forall i \in N)$.
\end{Lemma}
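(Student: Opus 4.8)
The plan is to reduce everything to the pointwise monotonicity of a single one-variable ``clamp'' function and then to average over permutations. First I would fix a permutation $\pi\in\Pi_N$ and a player $i\in N$, write $P=\Prec(\pi,i)$, and note that the marginal contribution only compares $v$ on the two sets $P$ and $P\cup\{i\}$, which differ exactly in $i$. Using the form $v(S)=\max\{0,\,E-\sum_{j\in N\setminus S}w_j\}$, I would set $t:=E-\sum_{j\in N\setminus(P\cup\{i\})}w_j$, so that $v(P\cup\{i\})=\max\{0,t\}$ and $v(P)=\max\{0,\,t-w_i\}$, giving $\Merg(\pi,i)=\max\{0,t\}-\max\{0,t-w_i\}$. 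Likewise $\MergP(\pi,i)=\max\{0,t'\}-\max\{0,t'-w_i\}$ with $t':=E'-\sum_{j\in N\setminus(P\cup\{i\})}w_j$, and since the weights are unchanged and $E\leq E'$, we get $t\leq t'$.

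Next I would analyse the function $f(x):=\max\{0,x\}-\max\{0,x-w_i\}$. Because $w_i>0$, a short case check on the three regions $x\leq 0$, $0\leq x\leq w_i$, and $x\geq w_i$ shows that $f(x)=\min\{w_i,\max\{0,x\}\}$, hence $0\leq f(x)\leq w_i$ and $f$ is non-decreasing in $x$. Applying this with $t\leq t'$ yields $\Merg(\pi,i)=f(t)\leq f(t')=\MergP(\pi,i)$, which is the first assertion of the lemma, valid for every $(\pi,i)\in\Pi_N\times N$.

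Finally, the inequality between the Shapley values follows immediately by averaging the pointwise bound over all permutations: for each $i\in N$,
\[
\phi_i=\frac{1}{n!}\sum_{\pi\in\Pi_N}\Merg(\pi,i)\ \leq\ \frac{1}{n!}\sum_{\pi\in\Pi_N}\MergP(\pi,i)=\phi'_i .
\]

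I do not anticipate any genuine obstacle in this argument; the only points that deserve a little care are verifying the monotonicity of $f$ (equivalently, recognising the increment $\max\{0,t\}-\max\{0,t-w_i\}$ as the ramp clamped between $0$ and $w_i$), and checking that the substitution $t\mapsto t'$ correctly isolates the dependence on the estate while holding $\Vec{w}$, and hence the set $N\setminus(P\cup\{i\})$, fixed.
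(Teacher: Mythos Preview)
Your proof is correct and follows essentially the same approach as the paper: both arguments exploit the piecewise-linear structure of the marginal contribution and do a three-region case split (your $x\le 0$, $0\le x\le w_i$, $x\ge w_i$ correspond exactly to the paper's three cases after translating between the $\BG$ and $\ReL$ parametrisations). Your packaging via the single monotone clamp function $f(x)=\min\{w_i,\max\{0,x\}\}$ is a slightly tidier way to organise the same computation, but the content is identical.
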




\begin{proof}
In the following proof,  we assume that $(N,v)$ and $(N,v')$ be a pair of rectified linear function games defined by 
$\ReL [w_0; \Vec{w}]$ and  $\ReL [w'_0; \Vec{w}]$
satisfying $w_0+E=w'_0+E'=W$ and thus $w_0 \geq w'_0$.
Obviously, we have that $\Merg (\pi, i), \MergP (\pi, i) \in [0,w_i]$.
If $w'_0 \leq w(\Prec (\pi, i))$, 
    then the inequality
\[ 
\begin{array}{r@{}l}
\MergP (\pi, i) &= v' (\Prec (\pi, i) \cup \{i\})- v' (\Prec (\pi, i)) \\
    &=(-w'_0 + w(\Prec (\pi, i) \cup \{i\})) -( -w'_0 + w(\Prec (\pi, i) )) 
    = w_i \geq \Merg (\pi, i)
\end{array}
\]
holds.
If $w(\Prec (\pi, i) \cup \{i\}) \leq  w_0$, 
    then we have the following inequality
\[ 
\Merg (\pi, i) = v (\Prec (\pi, i) \cup \{i\})- v(\Prec (\pi, i))
    =0 -0 =0 \leq \MergP (\pi, i).
\]
Lastly, we consider the remained case that
    $w(\Prec (\pi, i)) < w'_0 \leq w_0 < w(\Prec (\pi, i) \cup \{i\})$.
It is easy to see that
\[
\begin{array}{r@{}l}
\MergP (\pi, i) &= v' (\Prec (\pi, i) \cup \{i\})- v'(\Prec (\pi, i))
    = (-w'_0 + w(\Prec (\pi, i) \cup \{i\})) -0 \\
  &\geq   (-w_0 + w(\Prec (\pi, i) \cup \{i\})) -0 
  =  v (\Prec (\pi, i) \cup \{i\})- v(\Prec (\pi, i))
  = \Merg (\pi, i).
\end{array}
\]
\end{proof}

Previous studies have extensively examined various solution concepts in the context of bankruptcy games.
The pioneering work of O'Neill~\cite{o1982problem} introduced 
    the random arrival rule, 
    which is equivalent to the Shapley value in bankruptcy games.
Aziz~\cite{aziz2013computation} showed that
    the problem of computing the Shapley value is \#P-complete. 
The celebrated paper by Aumann and Maschler~\cite{aumann1985game} 
    proved that the solution concept discussed in the Babylonian Talmud 
    coincides with the nucleolus. 
They also described a polynomial time algorithm 
    for computing the nucleolus of the bankruptcy game.
The pre-kernel and pre-nucleolus for bankruptcy games
    are discussed in~\cite{funaki2006prekernel}.
For a detailed account of the properties of solution concepts in bankruptcy games, see~\cite{thomson2015axiomatic,driessen1988cooperative}.
A dual game representation of bankruptcy games is studied in~\cite{driessen1998greedy}.


\section{Hardness Result}
\label{section3}

In this section, we prove that the problem of computing the Shapley value in bankruptcy games is NP-complete.
The next lemma presents a special instance 
    in which the Shapley value admits 
    a closed-form expression.

 \begin{Lemma}\label{Theorem:PropDist}
For any vector of positive rational numbers
 $\Vec{w}=(w_1, w_2, \ldots ,w_n)$,  
the Shapley value 
of $\BG [\frac{W}{2};\Vec{w}]$ 
is equal to 
$(w_1/2, w_2/2, \ldots ,w_n/2),$ 
where $W=\sum_{i\in N}w_i$.
\end{Lemma}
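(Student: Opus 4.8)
The plan is to pass to the equivalent rectified linear function game $\ReL[W/2;\Vec{w}]$, whose characteristic function is $v(S)=\max\{0,\,w(S)-W/2\}$ (the hypothesis $E=W/2$ forces $w_0=W-E=W/2$), and to pair each permutation with its reverse. The whole argument hinges on a single complementation identity; everything else is routine bookkeeping.

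\emph{Step 1 (the key identity).} First I would establish that
\[
v(S) - v(N\setminus S) = w(S) - \tfrac{W}{2} \qquad (\forall S \subseteq N).
\]
This is checked directly: if $w(S)\geq W/2$ then $w(N\setminus S)=W-w(S)\leq W/2$, so the left-hand side equals $(w(S)-W/2)-0$; if $w(S)<W/2$ it equals $0-(W/2-w(S))$. In both cases one gets $w(S)-W/2$. This identity is the crux, and the only place the assumption on the estate is used essentially.

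\emph{Step 2 (reverse-permutation pairing).} For $\pi\in\Pi_N$ let $\pi^{R}$ be the reversed permutation, $\pi^{R}(j)=\pi(n+1-j)$. Writing $P=\Prec(\pi,i)$, one has $\Prec(\pi^{R},i)=N\setminus(P\cup\{i\})$ and $\Prec(\pi^{R},i)\cup\{i\}=N\setminus P$. Applying Step 1 to $S=P\cup\{i\}$ and to $S=P$ and subtracting, I would obtain
\[
\Merg(\pi,i) = v(P\cup\{i\})-v(P) = w_i - \bigl(v(N\setminus P)-v(N\setminus(P\cup\{i\}))\bigr) = w_i - \Merg(\pi^{R},i),
\]
so that $\Merg(\pi,i)+\Merg(\pi^{R},i)=w_i$ for every $(\pi,i)\in\Pi_N\times N$.

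\emph{Step 3 (summation).} Since $\pi\mapsto\pi^{R}$ is a bijection of $\Pi_N$, summing the last identity over all $\pi\in\Pi_N$ yields $2\sum_{\pi\in\Pi_N}\Merg(\pi,i)=n!\,w_i$, whence $\phi_i=\frac{1}{n!}\sum_{\pi\in\Pi_N}\Merg(\pi,i)=w_i/2$. I do not anticipate any genuine obstacle: once Step 1 is in hand, Steps 2 and 3 are immediate. (As an alternative route, one may note that the dual game of $\ReL[W/2;\Vec{w}]$ is $S\mapsto\min\{W/2,\,w(S)\}$ and that $v(S)+v^{*}(S)=w(S)$; since a game and its dual have the same Shapley value and the Shapley value of the additive game $S\mapsto w(S)$ is $\Vec{w}$, one again concludes $2\phi=\Vec{w}$.)
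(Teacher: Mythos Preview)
Your proof is correct and follows the same overall strategy as the paper: pass to $\ReL[W/2;\Vec{w}]$, pair each permutation with its reverse, show $\Merg(\pi,i)+\Merg(\pi^{R},i)=w_i$, and average. The difference lies in how that key identity is obtained. The paper establishes it by a three-case analysis on the positions of $w(\Prec(\pi,i)\cup\{i\})$ and $w(\Prec(\pi^{R},i)\cup\{i\})$ relative to $w_0$, computing each marginal contribution explicitly in every case. Your complementation identity $v(S)-v(N\setminus S)=w(S)-W/2$ packages exactly the same case split into a single line, so that Step~2 becomes a two-line subtraction rather than a page of cases; this is cleaner and makes transparent why the argument works. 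Your duality aside ($v+v^{*}$ is the additive game $S\mapsto w(S)$, hence $2\phi=\Vec{w}$ by self-duality and additivity) is a genuinely distinct and even shorter route that the paper does not use here, though it invokes self-duality elsewhere.
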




\begin{proof}
In this proof, we discuss the game  $(N,\vHalf)$
    defined by $\ReL [\frac{W}{2}; \Vec{w}]$, 
    which is equal to $\BG[\frac{W}{2}; \Vec{w}].$
We define 
 $\MergH (\pi, i)
     = \vHalf (\Prec (\pi, i) \cup \{i\}) -\vHalf (\Prec(\pi, i))$
$(\forall (\pi, i) \in \Pi_N \times N)$
and
$w_0=\frac{1}{2}W.$
For any permutation $\pi \in \Pi_N$, 
 we denote its reverse permutation by $\pi^\textup{R}$.

In the following, we show that
\[
  \MergH (\pi, i)+\MergH (\pi^\textup{R}, i)=w_i \quad (\forall i \in N).
\]
by considering  the following three cases;

\vskip 5pt

\begin{enumerate}[label=Case~\arabic*:, leftmargin=*]
  \item $w (\Prec (\pi, i) \cup \{i\}) < w_0,$
  \item $w(\Prec (\pi^\textup{R} ,i)\cup \{i\}  ) < w_0,$
  \item $w (\Prec (\pi, i) \cup \{i\}) \geq w_0$ and
        $w(\Prec (\pi^\textup{R} ,i)\cup \{i\}  ) \geq w_0.$
\end{enumerate}

\vskip 5pt
\noindent
\underline{Case~1}: Let consider the case that $w (\Prec (\pi, i) \cup \{i\}) < w_0$. 
Then, the marginal contribution satisfies 
\[
    \MergH (\pi, i) 
    = \vHalf (\Prec (\pi ,i)\cup \{i\}) - \vHalf (\Prec (\pi ,i))
    =0-0=0.
\]
The reverse permutation $\pi^\textup{R}$ satisfies that
\[
\begin{array}{r@{ }l}
 -w_0 + w(\Prec (\pi^\textup{R},i))
 &= -w_0 + w(N \setminus (\Prec (\pi ,i) \cup \{i\}))
 = -w_0 + w(N)-w(\Prec (\pi ,i) \cup \{i\}) \\
 &> -w_0 +W -w_0 = W -2w_0=W-W=0
 \end{array}
\]
and thus, 
\[
\begin{array}{r@{ }l}
 \MergH (\pi^\textup{R}, i) 
    &= \vHalf (\Prec (\pi^\textup{R} ,i)\cup \{i\}) 
    - \vHalf (\Prec (\pi^\textup{R} ,i)) \\
    &=-w_0 + w (\Prec(\pi^\textup{R} ,i)\cup \{i\} )
    -(-w_0 + w(\Prec(\pi^\textup{R} ,i))) \\
    &=w(\Prec(\pi^\textup{R} ,i))+w_i - w(\Prec(\pi^\textup{R} ,i))=w_i.
\end{array}
\]
We obtain that  $\MergH (\pi, i)+\MergH (\pi^\textup{R}, i)=w_i$.

\vskip 5pt
\noindent
\underline{Case~2}: When $w(\Prec (\pi^\textup{R} ,i)\cup \{i\}  ) < w_0$, 
 we can prove the desired equality similarly to Case 1 by swapping $\pi$ and 
 $\pi^\textup{R}$.

\vskip 5pt
\noindent
\underline{Case 3}: 
Consider the remained case that both
$w (\Prec (\pi, i) \cup \{i\}) \geq w_0$ and
 $w(\Prec (\pi^\textup{R} ,i)\cup \{i\}  ) \geq w_0$ hold. 
The assumption directly implies 
\[
\begin{array}{r@{ }l}
\vHalf (\Prec (\pi, i) \cup \{i\})
=-w_0 +w(\Prec (\pi, i) \cup \{i\}) 
= -w_0 +w(\Prec (\pi, i))+w_i 
\end{array}
\]
and
\[
\begin{array}{r@{ }l}
\vHalf (\Prec (\pi^\textup{R}, i) \cup \{i\})
&=-w_0 +w(\Prec (\pi^\textup{R}, i) \cup \{i\}) 
= -w_0 +w(N \setminus \Prec (\pi, i)) \\
&= -w_0 +W - w(\Prec (\pi, i)).
\end{array}
\]
We also obtain that 
\[
\begin{array}{r@{ }l}
w(\Prec (\pi, i) )
=w(N\setminus ( \Prec(\pi^\textup{R}, i) \cup \{i\})
=W - w( \Prec(\pi^\textup{R}, i) \cup \{i\})\leq W-w_0=W-W/2=w_0
\end{array}
\]
and thus 
\[
\vHalf (\Prec (\pi, i) )= \max \{0, -w_0+w(\Prec (\pi, i))=0.
\]
Similarly, we have that 
$\vHalf (\Prec (\pi^\textup{R}, i) )=0.$

The above properties result in the following 
\[
\begin{array}{r@{ }l}
\MergH (\pi, i)+\MergH (\pi^\textup{R}, i)
&= \vHalf (\Prec (\pi, i) \cup \{i\}) - w(\Prec (\pi, i) )
+\vHalf (\Prec (\pi^\textup{R}, i) \cup \{i\}) - \vHalf (\Prec (\pi^\textup{R}, i) ) \\
&=(-w_0 +w(\Prec (\pi, i))+w_i) -0
 +(-w_0 +W - w(\Prec (\pi, i)))-0 \\
 &=-2w_0+W +w_i =w_i.
\end{array}
\]

The Shapley value $\phi_i$ of player $i$ satisfies
\begin{align*}
\phi_i &= \frac{1}{n!}\sum_{\pi \in \Pi_N} \MergH (\pi, i) 
=\frac{1}{2\cdot n!}\sum_{\pi \in \Pi_N}(\MergH (\pi, i) + \MergH (\pi^{\textup{R}}, i) )
=\frac{1}{2\cdot n!}\sum_{\pi \in \Pi_N}w_i 
 =\frac{1}{2}w_i.
\end{align*}
\end{proof}


The above proof relies on the following fundamental property.

\begin{Cor} \label{Theorem:CorMerg}
For any $(\pi, i)\in \Pi_N \times N$, 
the equality 
$\MergH (\pi, i) + \MergH (\pi^{\textup{R}}, i)=w_i$
holds.
\end{Cor}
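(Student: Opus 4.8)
The plan is to note that this equality is precisely the intermediate claim already established, case by case, inside the proof of Lemma~\ref{Theorem:PropDist}; the corollary merely records it as a stand-alone fact. If I were to write the argument afresh, I would work in the rectified linear function game $(N,\vHalf)$ defined by $\ReL[\frac{W}{2};\Vec{w}]$ and set $w_0=\frac{W}{2}$, so that $2w_0=W$. The single structural observation that drives everything is the complementarity $\Prec(\pi^{\textup{R}},i)=N\setminus(\Prec(\pi,i)\cup\{i\})$, valid because reversing a permutation interchanges the predecessors and the successors of each player; it yields $w(\Prec(\pi^{\textup{R}},i)\cup\{i\})=W-w(\Prec(\pi,i))$ and $w(\Prec(\pi^{\textup{R}},i))=W-w(\Prec(\pi,i))-w_i$.

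First I would split into three cases: (i) $w(\Prec(\pi,i)\cup\{i\})<w_0$; (ii) $w(\Prec(\pi^{\textup{R}},i)\cup\{i\})<w_0$; (iii) both of these weights are at least $w_0$. In case (i), both $\vHalf$-values associated with $\pi$ vanish, so $\MergH(\pi,i)=0$, while complementarity gives $w(\Prec(\pi^{\textup{R}},i))=W-w(\Prec(\pi,i)\cup\{i\})>W-w_0=w_0$, whence $\vHalf$ is linear on both sets relevant to $\pi^{\textup{R}}$ and $\MergH(\pi^{\textup{R}},i)=w_i$; the sum is $w_i$. Case (ii) follows from case (i) by interchanging $\pi$ and $\pi^{\textup{R}}$, using $(\pi^{\textup{R}})^{\textup{R}}=\pi$.

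For case (iii), the assumption makes $\vHalf$ linear on $\Prec(\pi,i)\cup\{i\}$ and on $\Prec(\pi^{\textup{R}},i)\cup\{i\}$. Complementarity then forces $w(\Prec(\pi,i))=W-w(\Prec(\pi^{\textup{R}},i)\cup\{i\})\le W-w_0=w_0$ and, symmetrically, $w(\Prec(\pi^{\textup{R}},i))\le w_0$, so both $\vHalf(\Prec(\pi,i))$ and $\vHalf(\Prec(\pi^{\textup{R}},i))$ equal $0$. Substituting these four evaluations into $\MergH(\pi,i)+\MergH(\pi^{\textup{R}},i)$ and using $2w_0=W$ collapses the expression to $w_i$. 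I do not anticipate a genuine obstacle; the only subtlety is case (iii), where the balance $w_0=W/2$ is exactly what guarantees that the two predecessor sets (without $i$) both lie below the threshold, so that the case is consistent rather than vacuous.
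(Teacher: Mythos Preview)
Your proposal is correct and matches the paper's approach exactly: the paper states this corollary immediately after Lemma~\ref{Theorem:PropDist} as a fact extracted from that lemma's proof, and the three-case analysis you outline (with the complementarity $\Prec(\pi^{\textup{R}},i)=N\setminus(\Prec(\pi,i)\cup\{i\})$ and the use of $2w_0=W$) is precisely the argument given there.
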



We next consider the problem SHP, 
    which concerns the analysis of hardness.

\[
\fbox{\parbox{0.9\textwidth}{
\textbf{Problem SHP} \\
\textbf{Instance:} 
A positive integer vector $\Vec{w}=(w_1, \dots, w_n)$ 
    satisfying that
    $w_i$ is an even integer for each $i \in \{ 1, \dots, n-1\}$, and 
$w_n = 1.$  \\
\textbf{Question:} 
Let $(N,v)$ be a characteristic function form game defined by
$\BG [\lfloor \frac{W}{2} \rfloor; \Vec{w}]$
(which is equivalent to 
$\ReL [\left\lceil \frac{W}{2} \right\rceil ; \Vec{w}]$ ),
where $W=\sum_{i\in N} w_i$. 
Does the Shapley value 
$\phi_n = \frac{1}{n!}
    \sum_{\pi \in \Pi_N} \Merg (\pi ,n)$ 
satisfies $\phi_n < \frac{1}{2}$?
}}
\]

\begin{Theorem}\label{Theorem:NPC}
Problem SHP is NP-complete.
\end{Theorem}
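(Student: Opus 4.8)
The plan is to establish NP-completeness in the usual two steps: membership in NP, then NP-hardness by a reduction. For membership, I would exhibit a polynomial-size certificate that allows one to verify $\phi_n < \frac{1}{2}$. The natural candidate is to observe, via Lemma~\ref{Theorem:PropDist} and the monotonicity of Lemma~\ref{Theorem:w_0Monotone}, that $\phi_n$ for the game $\ReL[\lceil W/2\rceil;\Vec{w}]$ differs from $w_n/2 = 1/2$ (the value in $\ReL[W/2;\Vec{w}]$, using that $W$ is odd since $w_n=1$ and the other $w_i$ are even) in a controlled way; more concretely, I expect $\phi_n$ to be expressible as a ratio of two integers whose numerator is a sum over permutations that can be rewritten as a count of coalitions $S$ with $w(S)$ in a specific range. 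The certificate then records the relevant subset-sum counts, which are bounded by $2^n$ but — this is the subtle point — can be encoded succinctly using the dynamic programming / generating function structure: the number of subsets of $\{w_1,\dots,w_{n-1}\}$ with a given total is determined by a polynomial of degree $W$, and the question $\phi_n < 1/2$ reduces to comparing two such coefficients. Since all $w_i$ are even except $w_n$, the parity bookkeeping is clean. So membership in NP follows (indeed the quantity is polynomially checkable once the right combinatorial identity is in hand).

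For NP-hardness, the reduction is from PARTITION: given positive even integers $a_1,\dots,a_m$ with $\sum_i a_i = 2T$, ask whether some subset sums to $T$. Set $w_i = a_i$ for $i \in \{1,\dots,m\}$, $n = m+1$, and $w_n = 1$; then $W = 2T+1$, $\lceil W/2 \rceil = T+1$ and $\lfloor W/2\rfloor = T$. The core of the argument is to compute $\phi_n$ for $\ReL[T+1;\Vec{w}]$ in terms of the number of subsets $S \subseteq \{1,\dots,m\}$ with $w(S) = T$. The key observation: since $w_n = 1$, player $n$'s marginal contribution $\Merg(\pi,n)$ in a permutation $\pi$ is nonzero exactly when the set $\Prec(\pi,n)$ of predecessors has weight exactly $T$ (because adding a weight-$1$ element to a coalition of weight $T$ pushes it from below the threshold $T+1$ to exactly meeting it, contributing $w(\Prec(\pi,n) \cup \{n\}) - w_0 = (T+1) - (T+1) = 0$... ) — here I would need to be careful, and the precise window is $w(\Prec(\pi,n))=T$, where the marginal contribution equals some value strictly between $0$ and $1$. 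Summing over permutations and normalising, $\phi_n = \frac{1}{2} + c\cdot(\text{number of subsets of weight } T) + (\text{correction})$ for an explicit constant, or more likely $\phi_n < \frac{1}{2}$ if and only if no subset has weight exactly $T$. Thus a yes-instance of the SHP question corresponds to a no-instance of PARTITION (or vice versa, after adjusting the direction of the inequality).

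The main obstacle I anticipate is the exact computation of $\phi_n$ in the reduced instance — specifically, pinning down the precise relationship between the sign of $\phi_n - \frac{1}{2}$ and the existence of a subset of weight exactly $T$. One must carefully analyse $\Merg(\pi,n)$ across the three regimes (predecessor weight $< w_0 - 1$, in the boundary window, or $\geq w_0$), exploit the reversal identity of Corollary~\ref{Theorem:CorMerg} adapted to $w_0 = \lceil W/2\rceil$ rather than $W/2$, and track how the "defect" from the symmetric case $\ReL[W/2;\Vec{w}]$ is driven entirely by permutations whose split point lands on a coalition of weight exactly $T$. Because $W$ is odd, no coalition has weight exactly $W/2$, so the defect is a clean sum of contributions indexed by weight-$T$ subsets, each contributing a fixed positive amount; hence $\phi_n < \frac12$ precisely when that sum is strictly positive, i.e. precisely when a weight-$T$ subset exists. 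Making this counting argument rigorous, including the exact constant and confirming the inequality direction matches the problem statement, is the crux; the rest (NP membership, polynomial-time construction of the reduction, integrality of the data) is routine.
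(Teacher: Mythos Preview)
Your reduction from \textsc{Partition} is the right move and matches the paper's approach, but there are two genuine gaps.

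\textbf{NP membership.} Your certificate does not work. You propose encoding subset-sum counts via the generating function $\prod_i(1+x^{w_i})$ or a DP table, but the weights $w_i$ are only bounded in bit-length, so $W$ (and hence the degree of that polynomial, or the size of the DP table) can be exponential in the input size. This is pseudo-polynomial, not polynomial, and gives no poly-size certificate. The paper's certificate is far simpler and you came close to it without noticing: using Lemma~\ref{Theorem:w_0Monotone} and Corollary~\ref{Theorem:CorMerg} one has $\Merg(\pi,n)+\Merg(\pi^{\textup R},n)\le \MergH(\pi,n)+\MergH(\pi^{\textup R},n)=w_n=1$ for every $\pi$, so $\phi_n\le \tfrac12$ always, with equality iff the sum is $1$ for \emph{every} $\pi$. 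Hence a single permutation $\pi$ with $\Merg(\pi,n)+\Merg(\pi^{\textup R},n)<1$ is a polynomial-size certificate for $\phi_n<\tfrac12$.

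\textbf{NP-hardness.} You have the right instance but your analysis stalls exactly where it matters. You correctly compute that when $w(\Prec(\pi,n))=T$ the marginal contribution of player~$n$ is $0$, then trail off (``\dots'') because a single zero contribution does not by itself pull $\phi_n$ below $\tfrac12$. The missing mechanism is again the reversal pairing: if $S$ has weight exactly $T$, then for any $\pi$ with $\Prec(\pi,n)=S$ \emph{both} $\Merg(\pi,n)=0$ and $\Merg(\pi^{\textup R},n)=0$, so the pair sums to $0<1$ and (by the characterisation above) $\phi_n<\tfrac12$. Conversely, if no subset has weight $T$ then, using evenness of the $w_i$ for $i<n$, every $\pi$ has $w(\Prec(\pi,n))\le T-1$ or $\ge T+1$; in either case one checks $\Merg(\pi,n)+\Merg(\pi^{\textup R},n)=1$, forcing $\phi_n=\tfrac12$. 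This fixes the direction you were hedging on: \textsc{Partition} YES $\Leftrightarrow$ SHP YES, which is what you need for NP-hardness (the opposite direction would only give co-NP-hardness). Your earlier guess ``$\phi_n<\tfrac12$ iff no subset has weight $T$'' is backwards; your final paragraph has it right, but you need to commit to that and supply the argument above rather than leaving it as ``the crux''.
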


\begin{proof}
(1) Problem SHP belongs to the class NP.

We show that $\phi_n < \frac{1}{2}$ if and only if 
  $\exists \pi \in \Pi_N$ satisfying $\Merg (\pi, n) + \Merg (\pi^{\textup{R}}, n) <1$.
Lemma~\ref{Theorem:w_0Monotone} and Corollary~\ref{Theorem:CorMerg}  imply that 
\[
  \Merg (\pi, n)+\Merg (\pi^{\textup{R}}, n) 
  \leq \MergH (\pi, n) + \MergH(\pi^{\textup{R}}, n) = w_n=1.
\] 
Thus, we have that 
\[
\phi_n = \frac{1}{n!} \sum_{\pi \in \Pi_N} \Merg (\pi ,n)
    =\frac{1}{2\cdot n!} \sum_{\pi \in \Pi_N} 
        (\Merg (\pi ,n)+ \Merg(\pi^{\textup{R}}, n))
        \leq  \frac{1}{2\cdot n!} \sum_{\pi \in \Pi_N} 1  =\frac{1}{2}.
\]
The above properties directly imply that 
    $\phi_n =(1/2)$
    if and only if 
    \mbox{$\Merg (\pi ,n)+ \Merg (\pi^{\textup{R}}, n)=1$}
    holds for any $\pi \in \Pi_N$.
Now we have a desired result and 
    a permutation $\pi$ satisfying 
     \mbox{$\Merg (\pi ,n)+ \Merg (\pi^{\textup{R}}, n)< 1$}  
     is a polynomial size certificate when problem SHP has YES answer.

\smallskip 
\noindent
(2) We show that the known NP-complete problem PARTITION can be reduced to problem SHP in polynomial time.

\[
\fbox{\parbox{0.9\textwidth}{
\textbf{PARTITION Problem}\\
\textbf{Instance:} 
A positive integer vector 
 $(a_1, \dots, a_m)$. \\
\textbf{Question:} 
Does there exist a subset $S \subseteq \{1,2,\ldots ,m\}$ satisfying 
$\displaystyle 
 \sum_{i \in S}a_i 
   = \frac{1}{2}\sum_{i=1}^m a_i$?
}}
\]
We construct an instance of problem SHP from that of PARTITION as follows:
we set  $n = m+1$, 
\[
w_i = \left\{ \begin{array}{ll}
   2a_i & (i \in \{1,2,\dots, n-1\}), \\
   1     & (i=n=m+1),
   \end{array} \right. \mbox{ and }  
w_0 =  \frac{1}{2} \sum_{i=1}^{n-1} w_i + 1  
= \sum\limits_{i=1}^{m} a_i + 1.
\]
Now, we prove that the PARTITION problem instance has YES answer if and only if the answer to problem SHP instance is Yes.

\noindent\underline{
[PARTITION has Yes answer $\Rightarrow$ SHP has Yes answer]
}

As problem PARTITION has  Yes answer, 
 there exists $S \subseteq \{1,2,\ldots ,n-1\}$ 
 satisfying $\sum_{i \in S} a_i =  \frac{1}{2}\sum_{i=1}^{m}a_i$. 
We construct a permutation $\pi_{*} \in \Pi_N$ 
 satisfying $\Prec (\pi_{*}, n)=S$.
Then, it is easy to see that 
\[
 w(\Prec(\pi_{*}, n)\cup \{n\}) 
 =\frac{1}{2}\sum_{i=1}^{m}a_i +1
 = w_0.
 \]
Thus, we have that 
\[
\Merg (\pi_{*}, n) 
=v(\Prec(\pi_{*}, n)\cup \{n\})- v(\Prec(\pi_{*}, n))  ) =0-0=0.
\]
Similarly, the equalities 
\[
 w(\Prec(\pi_{*}^{\textup{R}}, n)\cup \{n\}) 
 =\frac{1}{2}\sum_{i=1}^{m}a_i +1
 = w_0
\]
imply that 
\[
\Merg (\pi_{*}^{\textup{R}}, n) 
=(v(\Prec(\pi_{*}^{\textup{R}}, n)\cup \{n\})- v(\Prec(\pi_{*}^{\textup{R}}, n))  ) 
 =0-0=0.
\]
As shown in (1), the inequality 
\[
\Merg (\pi_{*}, n)+ \Merg (\pi_{*}^{\textup{R}}, n)
    =0+0 =0 <1
\]
implies $\phi_n < \frac{1}{2}$. 

\noindent\underline{
[SHP has Yes answer $\Rightarrow$ PARTITION has Yes answer]
}

In the following, we show the contraposition.
Assume that PARTITION has NO answer, i.e., 
    $\forall S \subseteq \{1,2,\ldots n-1\},$
    $w(S) \neq (1/2)\sum_{i=1}^{n-1} w_i$.
We only need to show that $\forall \pi \in \Pi_N$, 
    $\Merg (\pi, n)+\Merg (\pi^{\textup{R}},n)=1$.
    
For any permutation $\pi \in \Pi_N$, we have the following.
The assumption implies that  $w(\Prec (\pi, n)) \neq (1/2)\sum_{i=1}^{n-1} w_i$.
Consider the case that $w(\Prec (\pi, n)) < (1/2)\sum_{i=1}^{n-1} w_i$.
As $\{w_1, w_2, \ldots ,w_{n-1}\}$ is a set of even integers, 
the inequality  $w(\Prec (\pi, n)) \leq  (1/2)\sum_{i=1}^{n-1} w_i -1$ holds.
Thus, we have that
\[
    w(\Prec (\pi, n)) \leq  w(\Prec (\pi, n)\cup \{n\}) 
    = w(\Prec (\pi, n)) +1 \leq  (1/2)\sum_{i=1}^{n-1} w_i =w_0,
\]
which leads to the equality
\[
    \Merg (\pi, n)=v (\Prec (\pi,n) \cup \{n\})-v(\Prec (\pi, n))=0-0=0.
\]
Similarly, we have that 
\[
\begin{array}{r@{}l}
  w(\Prec (\pi^{\textup{R}}, n) \cup \{n\}) 
  & \geq w(\Prec (\pi^{\textup{R}}, n))   
  = (W-w(\Prec(\pi ,n) \cup \{n\})) \\
  &\displaystyle 
    =\sum_{i=1}^{n-1}w_i+1 -( w(\Prec(\pi ,n)) +w_n) \\
  &\displaystyle 
    \geq  \sum_{i=1}^{n-1}w_i+1- ((1/2) \sum_{i=1}^{n-1}w_i -1) -w_n
 = (1/2) \sum_{i=1}^{n-1}w_i +1 = w_0,
\end{array}
\]
which leads to the equality
\[
\begin{array}{r@{}l}
 \Merg (\pi^{\textup{R}}, n)
 &=v (\Prec (\pi^{\textup{R}},n) \cup \{n\})-v(\Prec (\pi^{\textup{R}}, n)) \\
 &=(-w_0+ w(\Prec (\pi^{\textup{R}},n)\cup \{n\})) 
 -(-w_0+ w(\Prec (\pi^{\textup{R}},n))) \\
 &= w(\Prec (\pi^{\textup{R}},n))+w_n - w(\Prec (\pi^{\textup{R}},n))=1.
\end{array} 
\]
From the above, the equality   $\Merg (\pi, n)+\Merg (\pi^{\textup{R}},n)=1$ holds.
In a similar way, we can deal with the remaining case that 
    $w(\Prec (\pi, n)) > (1/2)\sum_{i=1}^{n-1} w_i$.
\end{proof}


\section{
Relation to Weighted Voting Games}\label{section4}


Lemma~\ref{Theorem:w_0Monotone} demonstrates that 
    the Shapley value is non-decreasing 
    in the estate $E$.
In this section, we present a detailed parametric analysis
    of the Shapley value as a function of the estate $E$.
Our result reveals a connection between 
    the Shapley value of bankruptcy games
    and the Shapley-Shubik index in weighted voting games.
Throughout this section, we assume that
 $E,w_0,w_1,\ldots ,w_n$ are positive integers.

\begin{Definition}[Weighted Voting Game]\label{Definition:weighted voting game}
Let $N = \{1,2,\ldots,n\}$ be a set of players,
   $\Vec{w}=(w_1, w_2, \ldots, w_n)$ be a vector of positive integers,
   and $q$ be a positive integer satisfying 
   $0 < q \leq w_1 + \cdots + w_n$.
Here, $w_i$ denotes the number of votes held by player $i$, 
    and $q$ denotes the quota, 
    i.e., the minimum number of votes required for a coalition to win. 
A weighted voting game,
    denoted by $\WVG[q; \Vec{w}]$,
    is a characteristic function form game 
    $(N, v^{\langle q \rangle})$, 
    whose characteristic function is defined as
\[
v^{\langle q \rangle} (S) =
\begin{cases}
1 & (\mbox{if } S \neq \emptyset \mbox{ and } \sum_{i \in S} w_i \geq q), \\
0 & (\mbox{otherwise}).
\end{cases}
\]
\end{Definition}

The Shapley value of a weighted voting game 
    is called the 
    \emph{Shapley–Shubik index} (see~\cite{ShapleyShubik1954}).
Given a quota $q$,  
    define $\varphi^{\langle q \rangle}$ as 
    the Shapley–Shubik index
    (Shapley value) of $\WVG [q; \Vec{w}].$ 
For a detailed account of algorithms for calculating power indices
    in weighted voting games, 
    see~\cite{Mann1962RM3158,matsui2000survey,uno2012efficient,chalkiadakis2011computational}.

\begin{Theorem}\label{Theorem:ssindex}
Assume that
 $E,w_1,\ldots ,w_n$ are positive integers
 satisfyig $E<\sum_{i \in N} w_i$.
The Shapley value $\phi$ of $\BG[E; \Vec{w}]$  
    satisfies
$
 \phi= \sum_{q = 1}^{E} \varphi^{\langle q \rangle}, 
$
    where $\varphi^{\langle q \rangle}$ is the 
    Shapley–Shubik index
    of $\WVG[q; \Vec{w}]$.
\end{Theorem}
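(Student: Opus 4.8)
The plan is to derive the identity from the \emph{self-duality} of the Shapley value together with its additivity over characteristic functions, applied to the dual of the bankruptcy game. Recall that the dual of a game $(N,v)$ is the game $(N,v^{*})$ with $v^{*}(S)=v(N)-v(N\setminus S)$, and that the Shapley values of $v$ and $v^{*}$ coincide. I would include the short argument for this: for every $(\pi,i)\in\Pi_N\times N$, the reversal identities $N\setminus\Prec(\pi,i)=\Prec(\pi^{\textup{R}},i)\cup\{i\}$ and $N\setminus(\Prec(\pi,i)\cup\{i\})=\Prec(\pi^{\textup{R}},i)$ give
\[
\delta_{v^{*}}(\pi,i)=v\bigl(N\setminus\Prec(\pi,i)\bigr)-v\bigl(N\setminus(\Prec(\pi,i)\cup\{i\})\bigr)=v(\Prec(\pi^{\textup{R}},i)\cup\{i\})-v(\Prec(\pi^{\textup{R}},i))=\delta_v(\pi^{\textup{R}},i),
\]
and since $\pi\mapsto\pi^{\textup{R}}$ is a bijection on $\Pi_N$, summing over $\pi$ in Definition~\ref{shapleyvalue} yields $\phi(v^{*})=\phi(v)$. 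This is the same reversal device already used in the proof of Lemma~\ref{Theorem:PropDist}.

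Next I would compute the dual of $(N,v)=\BG[E;\Vec w]$ explicitly. Since $0<E<W$ we have $v(N)=E$ and $v(N\setminus S)=\max\{0,\,E-w(S)\}$, hence $v^{*}(S)=E-\max\{0,\,E-w(S)\}=\min\{E,\,w(S)\}$ for all $S\subseteq N$. Now comes the key step: using that $E,w_1,\dots,w_n$ are positive integers, I would decompose this dual characteristic function into a sum of weighted-voting characteristic functions, namely
\[
v^{*}(S)=\min\{E,\,w(S)\}=\#\{\,q\in\mathbb{Z}\ :\ 1\le q\le E,\ w(S)\ge q\,\}=\sum_{q=1}^{E}v^{\langle q\rangle}(S)\qquad(\forall S\subseteq N).
\]
The last equality is verified term by term: if $S\neq\emptyset$ then $v^{\langle q\rangle}(S)=1$ precisely when $w(S)\ge q$, and if $S=\emptyset$ both sides vanish; moreover each $q\in\{1,\dots,E\}$ is an admissible quota because $1\le E<W=\sum_{i\in N}w_i$.

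Finally I would invoke the additivity of the Shapley value (immediate from Definition~\ref{shapleyvalue}, since $\delta_{v_1+v_2}(\pi,i)=\delta_{v_1}(\pi,i)+\delta_{v_2}(\pi,i)$) applied to the identity $v^{*}=\sum_{q=1}^{E}v^{\langle q\rangle}$, obtaining $\phi(v^{*})=\sum_{q=1}^{E}\varphi^{\langle q\rangle}$, and then combine with self-duality to conclude $\phi=\phi(v^{*})=\sum_{q=1}^{E}\varphi^{\langle q\rangle}$.

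The only real obstacle is getting the \emph{index range} right. If one instead decomposes $v$ itself, writing it as $\ReL[w_0;\Vec w]$ with $w_0=W-E$ so that $v(S)=\max\{0,\,w(S)-w_0\}=\sum_{q=w_0+1}^{W}v^{\langle q\rangle}(S)$, additivity yields $\phi=\sum_{q=w_0+1}^{W}\varphi^{\langle q\rangle}$ — the correct number of terms but with shifted indices. Routing the argument through the dual game (equivalently, through the relation $\varphi^{\langle q\rangle}=\varphi^{\langle W+1-q\rangle}$ arising because the dual of $\WVG[q;\Vec w]$ is $\WVG[W+1-q;\Vec w]$) is exactly what realigns the summation to run over $q=1,\dots,E$. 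Everything else is elementary.
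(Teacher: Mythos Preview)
Your proposal is correct and follows essentially the same route as the paper: compute the dual $v^{*}(S)=\min\{E,w(S)\}$, decompose it as $\sum_{q=1}^{E}v^{\langle q\rangle}$, then invoke additivity and self-duality of the Shapley value. You supply more detail than the paper does (the explicit reversal argument for self-duality and the closing remark on the alternative index range via $\varphi^{\langle q\rangle}=\varphi^{\langle W+1-q\rangle}$), but the substance is identical.
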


\begin{proof}
Let $(N,v)$ be a game defined by $\BG [E; \Vec{w}]$.
The dual game of $(N,v)$ is the game 
    $(N,v^*)$ defined by 
    $v^*(S)=v(N)-v(N \setminus S)$ 
    for every $S \subseteq N$.
It is obvious that 
\[
    v^*(S)
    =E-\max \left\{ 
        0, E-\sum_{i \in N\setminus (N \setminus S)} w_i 
    \right\}
    =\min \left\{
    E, \sum_{i\in S} w_i
    \right\}
    =\sum_{q=1}^{E} v^{\langle q \rangle} (S)
    \;\;\; (\forall S \subseteq N).
\]
The additivity (law of aggregation~\cite{shapley1953value})
    of the Shapley value implies that
    $\phi^*=\sum_{q=1}^E \varphi ^{\langle q \rangle}$
    where $\phi^*$ is the Shapley value of $(N,v^*)$.
The self-duality of the Shapley value
    (see, e.g.,~\cite{peleg2007introduction})
    implies $\phi^*=\phi$ 
    and thus we obtain the desired result.
\end{proof}

The above theorem implies that 
    the marginal contribution 
    to the Shapley value 
    resulting from an increase 
    in the estate from $E-1$ to $E$
    coincides with the Shapley–Shubik index 
    of the weighted voting game $\WVG [E;\Vec{w}]$.

By employing Theorem~\ref{Theorem:ssindex},
    one can compute the Shapley value of a bankruptcy game 
    by executing an algorithm for computing 
    the Shapley–Shubik index of a weighted voting game 
    $E$ times.
Furthermore, 
    by modifying the algorithm for computing 
    the Shapley–Shubik index, 
    it may be possible to construct a more efficient method than simply running it $E$ times.

In the following, we describe an algorithm
    based on the dynamic programming technique,
    whose time complexity is bounded by $O(n^2 E)$ for each player. 
The definition of weighted voting game 
    $(N,v^{\langle q\rangle})$ implies that
\[
    \begin{array}{rcl}
    \varphi_i^{\langle q \rangle}&=&\displaystyle 
    \sum_{S\subseteq N \setminus \{i\}}
    \frac{|S|! (n-|S|-1)!}{n!}
        (v^{\langle q\rangle}(S \cup \{i\})
        -v^{\langle q\rangle}(S)) \\
    &=& \displaystyle 
    \sum_{t=0}^{n-1} \sum_{w=q-w_i}^{q-1}
    \frac{t! (n-t-1)!}{n!} c_i (w,t),
   \end{array} 
\]
where 
\[
    c_i (w,t)
    =\left\{ \begin{array}{ll} 
    \displaystyle 
    \# \left\{S \subseteq N\setminus \{i\} 
        \left|  \sum_{i \in S} w_i =w \mbox{ and } |S|=t 
        \right. \right\} 
        & (\mbox{ if } 1\leq w\leq W \mbox{ and } 1\leq t \leq n-1), \\
    1 & (\mbox{ if } w=t=0), \\
    0 & (\mbox{otherwise}).
    \end{array} \right.
\]

Theorem~\ref{Theorem:ssindex} implies that
\[
\begin{array}{rcl}
\phi_i 
&=& \displaystyle 
 \sum_{q =1}^{E} \sum_{t=0}^{n-1}  \sum_{w=q-w_i}^{q-1}
    \frac{t! (n-t-1)!}{n!} c_i (w,t) \\
&=& \displaystyle 
 \sum_{t=0}^{n-1} 
 \left( 
  \frac{t! (n-t-1)!}{n!}  
    \sum_{w = 0}^{E-1} 
    \left(
        \min\{ E-w, w_i\} 
        c_i (w,t)
    \right)
 \right),  
 \end{array}
\]  
which allows us to compute the Shapley value $\phi_i$ 
in $O(nE)$ time using
the matrix $C_i=(c_i(w,t))$
defined by components satisfying 
$0\leq w \leq E-1$ and $0\leq t \leq n-1$.
An ordinary dynamic programming technique gives 
    an algorithm for calculating the matrix $C_i$
    in $O(n^2 E)$ time
    (see, e.g.,~\cite{matsui2000survey,chalkiadakis2011computational}).
Thus, there exists an algorithm for calculating 
    the Shapley value of a bankruptcy game 
    whose time complexity is bounded by $O(n^2 E)$ for each player. 

According to Theorem~\ref{Theorem:ssindex}, one can easily construct an algorithm for computing the Shapley–Shubik index by applying either of the recursive algorithms (Algorithm~\ref{algo-recL} or Algorithm~\ref{algo-recD} in the next section) to the games with estates $q$ and $q-1$ respectively, and taking the difference between the resulting Shapley values.
\section{Recursive Algorithms}\label{section:recursive}

In this section, 
    we describe recursive algorithms.
Throughout this section, we assume that 
    $E,w_0, w_1, \ldots ,w_n$ 
    are positive rational numbers 
    satisfying $E+w_0= W=\sum_{i \in N} w_i.$
 For each pair $(S',j)$ satisfying 
    $j\in S' \subseteq N$, 
    we define 
    $\PrmL{S'}{j}$ 
        to be the set of permutations on $S'$ 
        whose last component is player $j$.

\subsection{Recursive Completion Method}

In the following, 
    we describe a recursive completion method
    proposed in~\cite{o1982problem}, 
    which is based on the following property.

\begin{Theorem} \label{Theorem:recursiveL}
Let $(N,v)$  be a characteristic function form game 
    defined by $\ReL [w_0;\Vec{w}]$, 
    and let  $\phi$ denote
    its Shapley value.
Given a subset $S' \subseteq N$
    with $|S'|\geq 2$ and 
    $\sum_{k\in S'}w_k > w_0,$
    the restricted game $(S', v^{[S']})$
    is defined by 
    $v^{[S']}(S)=\max \{0, -w_0+\sum_{k\in S}w_k\}$
     $(\forall S \subseteq S').$
For every non-empty subset $S' \subseteq N,$    
    we define a vector $\phi^{[S']}$ 
    indexed by $S'$ as follows:
\[
    \phi^{[S']} =
    \begin{cases}
    \mbox{the Shapley value of } 
        (S', v^{[S']}) 
        &(\mbox{if } |S'|\geq 2 \mbox{ and } \sum_{k\in S'}w_k>w_0),\\
    \Vec{0} & (\mbox{if } |S'|\geq 2 \mbox{ and } \sum_{k\in S'}w_k\leq w_0), \\
    v(\{k\}) &  (\mbox{if } |S'|= 1 \mbox{ and } S'=\{k\}). \\
    \end{cases}
\]

Then, whenever  $|S'|\geq 2$ and  $\sum_{k\in S'}w_k>w_0$,
    the components of $\phi^{[S']}$ satisfy
    the recursive relation
\[
    \phi^{[S']}_i=
    \frac{1}{|S'|}
    \left(
        \min \{w_i,v(S') \} 
        + \sum_{j\in S'\setminus \{i\}} 
            \phi^{[S'\setminus \{j\}]}_i
    \right)
    \quad (\forall i \in S').
\]
\end{Theorem}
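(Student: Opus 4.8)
The plan is to expand the Shapley value of the restricted game $(S',v^{[S']})$ as an average over permutations of $S'$ and to condition on the identity of the last player in the permutation. Write $\delta^{[S']}(\pi,i)=v^{[S']}(\Prec(\pi,i)\cup\{i\})-v^{[S']}(\Prec(\pi,i))$ for the marginal contribution of $i$ in a permutation $\pi\in\Pi_{S'}$, so that $\phi^{[S']}_i=\frac{1}{|S'|!}\sum_{\pi\in\Pi_{S'}}\delta^{[S']}(\pi,i)$. I would split $\Pi_{S'}$ into $\PrmL{S'}{i}$ together with $\PrmL{S'}{j}$ for $j\in S'\setminus\{i\}$, according to whether the last component of $\pi$ is $i$ itself or some other player $j$, and evaluate each part separately.

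First I would treat $\PrmL{S'}{i}$. For such $\pi$ we have $\Prec(\pi,i)\cup\{i\}=S'$ and $\Prec(\pi,i)=S'\setminus\{i\}$, so $\delta^{[S']}(\pi,i)=v^{[S']}(S')-v^{[S']}(S'\setminus\{i\})=(w(S')-w_0)-\max\{0,\,w(S')-w_i-w_0\}$, using $\sum_{k\in S'}w_k>w_0$. A two-line case distinction (according to the sign of $w(S')-w_i-w_0$) shows this equals $\min\{w_i,v(S')\}$, where $v(S')=v^{[S']}(S')=w(S')-w_0$. This quantity is independent of $\pi$, and $|\PrmL{S'}{i}|=(|S'|-1)!$, so the permutations in $\PrmL{S'}{i}$ contribute $(|S'|-1)!\,\min\{w_i,v(S')\}$ to the sum.

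Next I would fix $j\in S'\setminus\{i\}$ and handle $\PrmL{S'}{j}$. Since $j$ is last, both $\Prec(\pi,i)$ and $\Prec(\pi,i)\cup\{i\}$ are subsets of $S'\setminus\{j\}$, and deleting $j$ from $\pi$ is a bijection $\PrmL{S'}{j}\to\Pi_{S'\setminus\{j\}}$ that preserves the set of predecessors of $i$. Because $v^{[S']}$ and $v^{[S'\setminus\{j\}]}$ coincide on every subset of $S'\setminus\{j\}$ (both equal $\max\{0,-w_0+w(\cdot)\}$) whenever $w(S'\setminus\{j\})>w_0$, the corresponding marginal contributions agree, whence $\sum_{\pi\in\PrmL{S'}{j}}\delta^{[S']}(\pi,i)=\sum_{\pi'\in\Pi_{S'\setminus\{j\}}}\delta^{[S'\setminus\{j\}]}(\pi',i)=(|S'|-1)!\,\phi^{[S'\setminus\{j\}]}_i$. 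Summing the two parts over all $j\in S'\setminus\{i\}$ and dividing by $|S'|!$ yields exactly the asserted identity $\phi^{[S']}_i=\frac{1}{|S'|}\bigl(\min\{w_i,v(S')\}+\sum_{j\in S'\setminus\{i\}}\phi^{[S'\setminus\{j\}]}_i\bigr)$.

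The step requiring the most care — and the expected obstacle — is reconciling the degenerate conventions in the definition of $\phi^{[S']}$ with the permutation argument. When $|S'\setminus\{j\}|\geq 2$ but $w(S'\setminus\{j\})\leq w_0$, the vector $\phi^{[S'\setminus\{j\}]}$ is declared to be $\Vec{0}$ rather than an actual Shapley value, so I must check directly that every $\delta^{[S']}(\pi,i)$ with $\pi\in\PrmL{S'}{j}$ vanishes; this holds because $\Prec(\pi,i)\cup\{i\}\subseteq S'\setminus\{j\}$ forces $v^{[S']}(\Prec(\pi,i)\cup\{i\})=v^{[S']}(\Prec(\pi,i))=0$. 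Similarly, when $|S'|=2$ the set $S'\setminus\{j\}$ is the singleton $\{i\}$, and I must confirm that the base-case convention $\phi^{[\{i\}]}_i=v(\{i\})$ matches $\sum_{\pi\in\PrmL{S'}{j}}\delta^{[S']}(\pi,i)=\delta^{[S']}(\pi,i)=v^{[S']}(\{i\})=v(\{i\})$. Once these bookkeeping cases are absorbed, the recursion holds uniformly.
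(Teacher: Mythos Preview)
Your proposal is correct and follows essentially the same route as the paper: split $\Pi_{S'}$ according to the last player, identify the contribution from $\PrmL{S'}{i}$ as $\min\{w_i,v(S')\}$, and match each block $\PrmL{S'}{j}$ to $\phi^{[S'\setminus\{j\}]}_i$ via the obvious bijection, with the same three-way case check (the non-degenerate subgame, the zero subgame, and the singleton base case). Your explicit verification that $v^{[S']}(S')-v^{[S']}(S'\setminus\{i\})=\min\{w_i,v(S')\}$ is a detail the paper leaves implicit, but otherwise the arguments are identical.
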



\begin{proof}

    It is obvious that 
    for every subset 
    $S'' \subseteq N$ such that  
    $|S''|\geq 2$ and  $\sum_{k\in S''}w_k>w_0$, 
    we have 
\[
    v^{[S'']} (S)
        =\max \left\{ 0, -w_0 +\sum_{k\in S} w_k \right\}
        =v(S) \;\; (\forall S \subseteq S'').
\]

If $|S'|\geq 2$ and  $\sum_{k\in S'}w_k > w_0,$
    then the Shapley value $\phi^{[S']}$ 
    satisfies that 
\begin{eqnarray*}
 \phi^{[S']}_i
 &=&
 \frac{1}{|S'|!}\sum_{\pi \in \Pi_{S'}} 
    \delta_{v^{[S']}}(\pi,i) 
  = \frac{1}{|S'|!} 
  \left( 
    \sum_{\pi \in \PrmL{S'}{i}} \delta_{v^{[S']}}(\pi,i) 
    + \sum_{j\in S' \setminus \{i\}} 
      \sum_{\pi \in \PrmL{S'}{j}} \delta_{v^{[S']}}(\pi,i) 
  \right) \\
  &=&\frac{1}{|S'|}
  \left( \min \{w_i, v(S')\}   
    + \sum_{j\in S' \setminus \{i\}} 
     \sum_{\pi \in \PrmL{S'}{j}} 
   \frac{ v^{[S']}(\Prec (\pi,i)\cup \{i\})
    - v^{[S']}(\Prec (\pi,i))}{(|S'|-1)!}
    \right) 
    \\
  &=&\frac{1}{|S'|}
  \left( \min \{w_i, v(S')\}   
    + \sum_{j\in S' \setminus \{i\}} 
     \sum_{\pi \in \PrmL{S'}{j}} 
   \frac{ v(\Prec (\pi,i)\cup \{i\})
    - v(\Prec (\pi,i))}{(|S'|-1)!}
    \right). 
    \\
\end{eqnarray*}

\noindent
We now turn to the final term introduced above
    for each $j \in S'\setminus \{i\}.$

\noindent
(1) In case that  $|S' \setminus \{j\}| \geq 2$ and
    $\sum_{k\in S'\setminus\{j\}}w_k>w_0$, we have that  
\begin{eqnarray*}
\lefteqn{
  \sum_{\pi \in \PrmL{S'}{j}} 
   \frac{ v(\Prec (\pi,i)\cup \{i\})
    - v(\Prec (\pi,i))}{(|S'|-1)!}
}\\
&=&  \sum_{\pi \in \PrmL{S'}{j}}
   \frac{ v^{[S'\setminus \{j\}]}(\Prec (\pi,i)\cup \{i\})
    - v^{[S'\setminus \{j\}]}(\Prec (\pi,i))}{(|S'|-1)!}
=  \sum_{\pi \in \PrmL{S'}{j}}
   \frac{ \delta_{v^{[S'\setminus \{j\}]}} (\pi,i)}{(|S'|-1)!}\\
&=& \frac{1}{|S' \setminus \{j\}|!} 
    \sum_{\pi \in \Pi_{S'\setminus \{j\}}} 
        \delta_{v^{[S'\setminus \{j\}]}} (\pi,i)
= \phi^{[S' \setminus \{j\}]}_i. 
\end{eqnarray*}

\noindent
(2) Consider the case that 
    $|S' \setminus \{j\}|\geq 2$ and $\sum_{k\in S'\setminus\{j\}}w_k\leq w_0.$
As $j \neq i$, every permutation $\pi \in \PrmL{S'}{j}$ satisfies that      
    $\Prec (\pi,i) \subseteq
    \Prec (\pi, i) \cup \{i\} \subseteq S' \setminus \{j\}$
    and thus 
    $0\leq v(\Prec (\pi, i))
    \leq v(\Prec (\pi, i)\cup \{i\} ) 
    \leq v(S' \setminus \{j\})=0$.
From the above, we obtain 
\[
  \sum_{\pi \in \PrmL{S'}{j}} 
   \frac{ v(\Prec (\pi,i)\cup \{i\})
    - v(\Prec (\pi,i))}{(|S'|-1)!}
  =\sum_{\pi \in \PrmL{S'}{j}} 
   \frac{ 0-0}{(|S'|-1)!}
   =0=\phi^{[S' \setminus \{j\}]}_i.
\]

\noindent
(3) Finally, we consider the case 
    where $|S'\setminus \{j\}|=1.$
    It implies that $S'=\{i,j\}$ and 
    $\PrmL{S'}{j}$ contains the unique permutation $(\pi(1),\pi(2))=(i,j)$, 
    from which it follows that
\[
  \sum_{\pi \in \PrmL{S'}{j}} 
   \frac{ v(\Prec (\pi,i)\cup \{i\})
    - v(\Prec (\pi,i))}{(|S'|-1)!}
  = \frac{v(\{i\}) -v(\emptyset )}{1!}
   =v(\{i\})
   =\phi^{[\{i\}]}_i=\phi^{[S' \setminus \{j\}]}_i.
\]    
\end{proof}


\smallskip 

The recursive formula 
    in Theorem~\ref{Theorem:recursiveL}, 
    leads to Algorithm~\ref{algo-recL}.
In the implementation of Algorithm~\ref{algo-recL}, 
    the memoisation technique is employed,
    which stores
    previously computed results so as to prevent 
    redundant computations.
The use of this technique allows 
    the time complexity to be bounded 
    by $O(n^2 {\cal L})$, 
    assuming sufficient memory resources, 
    where ${\cal L}
    =\# \{S' \subseteq N  \mid |S'|\geq 2, v(S')>0 \}
    =\# \{S'' \subseteq N \mid 
        |S''| \leq n-2, \sum_{k\in S''} w_k < E\}$.  

%
%

\begin{algorithm}
\caption{Recursive Completion Method by O'Neill~\cite{o1982problem}}\label{algo-recL}
\begin{algorithmic}
\Input{A positive rational vector $(w_0; w_1, w_2, \dots, w_n)$}
    satisfying $w_0 < \sum_{i=1}^n w_i.$
\Output{A payoff vector $\phi = (\phi_1, \phi_2, \dots, \phi_n)$.}
\Function{RC}{$S'$}
    \State $v\_S' \gets \max\{0,-w_0+\sum_{k\in S'}w_k\}$
    \If{$|S'| = 1$}
        \State \Return $v\_S'$
    \ElsIf{$\sum_{k\in S'}w_k\leq w_0$}
        \State \Return \Vec{0} (\mbox{the zero vector indexed by  $S'$})
    \Else
        \For{each $j \in S'$}
            \State $\phi^{[S'\setminus\{j\}]} \gets \Call{RC}{S'\setminus \{j\}}$
        \EndFor
        \For{each $i \in S'$}
            \State $\displaystyle
            \phi^{[S']}_i \gets
            \frac{1}{|S'|}
            \left(
                \min\{w_i, v\_S'\}
                + \sum_{j\in S'\setminus \{i\}}\phi^{[S'\setminus\{j\}]}_i
            \right)$
        \EndFor
        \State \Return $(\phi^{[S']}_i)_{i \in S'}$
    \EndIf
\EndFunction

\Statex

\State Compute $\phi \gets \Call{RC}{N}$
\State \Return $\phi$
\end{algorithmic}
\end{algorithm}

\subsection{
A Recursive Algorithm via the Dual Game
} 

In this subsection,
    we propose a recursive algorithm
    based on the dual game $(N,v^*)$ 
    of $\BG [E; \Vec{w}]$,
    which is defined by 
    $v^*(S)=\min \{E, \sum_{k \in S} w_k\}$ $(\forall S \subseteq N).$

\begin{Theorem} \label{Theorem:recursiveD}
Let $(N,v^*)$ be the dual game of $\BG [E;\Vec{w}]$
    and denote its Shapley value by $\phi^*$.
Given a subset $S' \subseteq N$
    satisfying $|S'|\geq 2$ 
    and $\sum_{k \in S'} w_k >E$, 
    the restricted game 
    $(S', v^{*[S']})$
    is defined by 
    $v^{*[S']}(S)=\min \{E, \sum_{k\in S} w_k\}$
    $(\forall S \subseteq S').$
For every non-empty subset $S' \subseteq N,$   
    we define the vector $\phi^{*[S']}$ 
    indexed by $S'$ as follows:
\[
    \phi^{*[S']}=
    \begin{cases}
    \mbox{the Shapley value of }  (S', v^{*[S']})
    & (\mbox{if } |S'|\geq 2
        \mbox{ and } \sum_{k\in S'} w_k > E
        ), \\
    (w_i)_{i\in S'} 
    & (\mbox{if } |S'|\geq 2
        \mbox{ and } \sum_{k\in S'} w_k \leq E
      ), \\
     v^*(\{k\}) 
     & ( \mbox{if } |S'|= 1 \mbox{ and } S'=\{k\}).  \\
    \end{cases}  
\]    

Then, for any $S'\subseteq N$ 
    satisfying  $|S'|\geq 2$ 
    and  $\sum_{k\in S'} w_k >E$,
    the vector $\phi^{[S']}$ satisfies the recursive relation
    
\[
    \phi^{*[S']}_i=
    \frac{1}{|S'|}
    \left(
        v^*(S')-v^*(S' \setminus \{i\})  
        + \sum_{j\in S'\setminus \{i\}} 
            \phi^{*[S'\setminus \{j\}]}_i
    \right) 
    \qquad (\forall i \in S').
\]
\end{Theorem}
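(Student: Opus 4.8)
The plan is to mirror, almost verbatim, the argument used for Theorem~\ref{Theorem:recursiveL}, the only genuine departure occurring in the middle case of the case analysis. First I would record the elementary observation that whenever $|S''|\geq 2$ and $\sum_{k\in S''}w_k>E$, the restricted dual game agrees with $v^*$ on every subset of $S''$, i.e.\ $v^{*[S'']}(S)=\min\{E,\sum_{k\in S}w_k\}=v^*(S)$ for all $S\subseteq S''$. In particular $\delta_{v^{*[S']}}(\pi,i)=\delta_{v^*}(\pi,i)$ for every permutation $\pi$ of $S'$, so $\phi^{*[S']}_i$ may be computed using $v^*$ throughout.

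Next, fix $S'$ with $|S'|\geq 2$ and $\sum_{k\in S'}w_k>E$, and split the sum defining $\phi^{*[S']}_i$ according to the last player of the permutation:
\[
\phi^{*[S']}_i=\frac{1}{|S'|!}\left(\sum_{\pi\in\PrmL{S'}{i}}\delta_{v^*}(\pi,i)+\sum_{j\in S'\setminus\{i\}}\sum_{\pi\in\PrmL{S'}{j}}\delta_{v^*}(\pi,i)\right).
\]
For $\pi\in\PrmL{S'}{i}$ one has $\Prec(\pi,i)=S'\setminus\{i\}$, so every term of the first sum equals $v^*(S')-v^*(S'\setminus\{i\})$ and the first sum contributes $\frac{1}{|S'|}(v^*(S')-v^*(S'\setminus\{i\}))$. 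For each $j\in S'\setminus\{i\}$, deleting the last entry $j$ is a bijection from $\PrmL{S'}{j}$ onto $\Pi_{S'\setminus\{j\}}$, and since $j$ is last we have $\Prec(\pi,i)\cup\{i\}\subseteq S'\setminus\{j\}$; hence the inner sum involves only values of $v^*$ on subsets of $S'\setminus\{j\}$. The goal is to identify $\frac{1}{(|S'|-1)!}\sum_{\pi\in\PrmL{S'}{j}}\delta_{v^*}(\pi,i)$ with $\phi^{*[S'\setminus\{j\}]}_i$.

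This identification I would carry out in three cases. (1) If $|S'\setminus\{j\}|\geq 2$ and $\sum_{k\in S'\setminus\{j\}}w_k>E$, then $v^*$ and $v^{*[S'\setminus\{j\}]}$ coincide on subsets of $S'\setminus\{j\}$, and the bijection above turns the inner sum into $(|S'|-1)!\,\phi^{*[S'\setminus\{j\}]}_i$. (2) If $|S'\setminus\{j\}|\geq 2$ and $\sum_{k\in S'\setminus\{j\}}w_k\leq E$, then for every $S\subseteq S'\setminus\{j\}$ the cap $E$ is inactive, so $v^{*[S'\setminus\{j\}]}(S)=\sum_{k\in S}w_k$; this is an additive (inessential) game whose Shapley value is exactly $(w_k)_{k\in S'\setminus\{j\}}$, matching the definition of $\phi^{*[S'\setminus\{j\}]}$, and each $\delta_{v^*}(\pi,i)$ in the inner sum equals $w_i$ directly. (3) If $|S'\setminus\{j\}|=1$, then $S'=\{i,j\}$, the unique permutation in $\PrmL{S'}{j}$ is $(i,j)$, its marginal contribution is $v^*(\{i\})-v^*(\emptyset)=v^*(\{i\})$, and $\phi^{*[\{i\}]}_i=v^*(\{i\})$ by definition. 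In all three cases the claimed identity holds, and substituting back yields the stated recursion.

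The step I expect to be the main obstacle — or rather the only point at which the argument genuinely differs from the primal proof of Theorem~\ref{Theorem:recursiveL} — is case (2): in the primal setting the analogous sub-case produced the zero vector because the characteristic function floors at $0$, whereas here the restricted dual game collapses not to the zero game but to the purely additive game $S\mapsto\sum_{k\in S}w_k$. Recognising that its Shapley value is the full weight vector $(w_k)$ is precisely what forces the base clause $\phi^{*[S']}=(w_i)_{i\in S'}$ in the statement, and it is the one place that needs a sentence of justification rather than a direct transcription of the earlier proof.
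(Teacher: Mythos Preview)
Your proposal is correct and follows essentially the same approach as the paper's own proof: the same split of $\Pi_{S'}$ by the last player, the same observation that $v^{*[S'']}$ agrees with $v^*$ on subsets whenever the total weight exceeds $E$, and the same three-case analysis for $j\in S'\setminus\{i\}$. Your identification of case~(2) as the sole substantive departure from the primal argument is exactly right, and your handling of it (the cap is inactive, the restricted game is additive, every marginal contribution equals $w_i$) matches the paper's computation.
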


\begin{proof}
    Whenever 
    $S'' \subseteq N$ satisfies   
    $|S''|\geq 2$ and  $\sum_{k\in S''} w_k > E$, 
    we have 
\[
    v^{*[S'']} (S)
        =\min \left\{ E, \sum_{k\in S} w_k \right\}
        =v^*(S) \;\; (\forall S \subseteq S'').
\]

If $|S'|\geq 2$ and $\sum_{k\in S'}w_k > E$, 
    then the Shapley value $\phi^{*[S']}$ satisfies that 
\begin{eqnarray*}
 \phi^{*[S']}_i
 &=&
 \frac{1}{|S'|!}\sum_{\pi \in \Pi_{S'}} 
    \delta_{v^{*[S']}}(\pi,i) 
  = \frac{1}{|S'|!}
  \left( 
    \sum_{\pi \in \PrmL{S'}{i}} \delta_{v^{*[S']}}(\pi,i) 
    + \sum_{j\in S' \setminus \{i\}} 
      \sum_{\pi \in \PrmL{S'}{j}} \delta_{v^{*[S']}}(\pi,i) 
  \right) \\
  &=&\frac{1}{|S'|}
  \left( 
    \begin{array}{l}
    \displaystyle
    v^{*[S']}(S')-v^{*[S']}(S' \setminus \{i\}) \\
    \displaystyle \qquad \qquad
    + \sum_{j\in S' \setminus \{i\}} 
     \sum_{\pi \in \PrmL{S'}{j}} 
   \frac{ v^{*[S']}(\Prec (\pi,i)\cup \{i\})
    - v^{*[S']}(\Prec (\pi,i))}{(|S'|-1)!}
    \end{array}
    \right) 
    \\
  &=&\frac{1}{|S'|}
  \left( 
    \begin{array}{l}
    \displaystyle
    v^*(S')-v^*(S' \setminus \{i\}) \\
    \displaystyle \qquad \qquad
    + \sum_{j\in S' \setminus \{i\}} 
     \sum_{\pi \in \PrmL{S'}{j}} 
   \frac{ v^*(\Prec (\pi,i)\cup \{i\})
    - v^*(\Prec (\pi,i))}{(|S'|-1)!}
    \end{array}
    \right) 
    \\
\end{eqnarray*}

\noindent
We now turn to the final term introduced above
    for each $j \in S'\setminus \{i\}.$

\noindent
(1) In case that  $|S' \setminus \{j\}| \geq 2$ and
    $\sum_{k\in S' \setminus \{j\}} w_k>E$, 
    we have that  
\begin{eqnarray*}
\lefteqn{
  \sum_{\pi \in \PrmL{S'}{j}} 
   \frac{ v^*(\Prec (\pi,i)\cup \{i\})
    - v^*(\Prec (\pi,i))}{(|S'|-1)!}
}\\
&=&  \sum_{\pi \in \PrmL{S'}{j}}
   \frac{ v^{*[S'\setminus \{j\}]}(\Prec (\pi,i)\cup \{i\})
    - v^{*[S'\setminus \{j\}]}(\Prec (\pi,i))}{(|S'|-1)!}
=  \sum_{\pi \in \PrmL{S'}{j}}
   \frac{ \delta_{v^{*[S'\setminus \{j\}]}} (\pi,i)}{(|S'|-1)!}\\
&=& \frac{1}{|S' \setminus \{j\}|!} 
    \sum_{\pi \in \Pi_{S'\setminus \{j\}}} 
        \delta_{v^{*[S'\setminus \{j\}]}} (\pi,i)
= \phi^{*[S' \setminus \{j\}]}_i. 
\end{eqnarray*}

\noindent
(2) Consider the case that 
    $|S' \setminus \{j\}|\geq 2$ and
    $\sum_{k\in S' \setminus \{j\}} w_k \leq E$.
As $j \in S' \setminus \{i\}$, 
    every permutation $\pi \in \PrmL{S'}{j}$ satisfies that      
    $\Prec (\pi,i) \subseteq
    \Prec (\pi, i) \cup \{i\} \subseteq S' \setminus \{j\}$
    and thus 
\[
      \sum_{k\in \Prec (\pi, i)} w_k 
 \leq \sum_{k\in \Prec (\pi, i)\cup \{i\}} w_k  
 \leq \sum_{k\in S' \setminus \{j\}} w_k \leq E.
\]
From the above, we obtain 
\begin{eqnarray*}
\lefteqn{
  \sum_{\pi \in \PrmL{S'}{j}} 
   \frac{ v^*(\Prec (\pi,i)\cup \{i\})
    - v^*(\Prec (\pi,i))}{(|S'|-1)!} 
}\\
 &=&\frac{1}{(|S'|-1)!}
  \sum_{\pi \in \PrmL{S'}{j}} 
  \left(
      \sum_{k\in \Prec (\pi,i)\cup \{i\}} w_k 
    - \sum_{k\in \Prec (\pi,i)} w_k
  \right) 
  =\frac{
    \sum_{\pi \in \PrmL{S'}{j}} w_i
  }{(|S'|-1)!}
  =w_i=\phi^{*[S' \setminus \{j\}]}_i.
\end{eqnarray*}

\noindent
(3) Finally, we consider the case 
    where $|S'\setminus \{j\}|=1.$
    It implies that $S'=\{i,j\}$ and 
    $\PrmL{S'}{j}$ contains the unique permutation $(\pi(1),\pi(2))=(i,j)$, 
    from which it follows that
\[
  \sum_{\pi \in \PrmL{S'}{j}} 
   \frac{ v^*(\Prec (\pi,i)\cup \{i\})
    - v^*(\Prec (\pi,i))}{(|S'|-1)!}
  = \frac{v^*(\{i\}) -v^*(\emptyset )}{1!}
   =v^*(\{i\})
   =\phi^{*[\{i\}]}_i
   =\phi^{*[S' \setminus \{j\}]}_i.
\]    

\end{proof}

The recursive formula 
    in Theorem~\ref{Theorem:recursiveD}, 
    leads to Algorithm~\ref{algo-recD}.
The self-duality of Shapley value implies that
    $\phi=\phi^*$ and thus 
    Algorithm~\ref{algo-recD} computes the Shapley value
    of the given Bankruptcy game.

\begin{algorithm}[h]
\caption{Recursive Algorithm via the Dual Game}\label{algo-recD}
\begin{algorithmic}[Algorithm D]
\Input{A positive rational vector $(E; w_1, w_2, \dots, w_n)$}
    satisfying $E < \sum_{i=1}^n w_i.$
\Output{A payoff vector $\phi^* = (\phi_1^*, \phi_2^*, \dots, \phi_n^*)$.}
\Function{DRC}{$S'$}
    \State $v^*_{-}S' \gets \min\{E,\sum_{k\in S'}w_k\}$
    \If{$|S'| = 1$}
        \State \Return $v^*_{-}S'$
    \ElsIf{$\sum_{k\in S'}w_k\leq E$}
        \State \Return $(w_i)_{i\in S'}$
    \Else
        \For{each $j \in S'$}
            \State $\phi^{*[S'\setminus\{j\}]} \gets \Call{DRC}{S'\setminus \{j\}}$
        \EndFor
        \For{each $i \in S'$}
            \State $v^*_{-}S'\setminus\{i\} \gets \min\{E,\sum_{k\in S'\setminus \{i\}}w_k\}$
            \State $\displaystyle
            \phi^{*[S']}_i \gets
            \frac{1}{|S'|}
            \left(
                v^*_{-}S' - v^*_{-}S'\setminus\{i\}
                + \sum_{j\in S'\setminus \{i\}}\phi^{*[S'\setminus\{j\}]}_i
            \right)$
        \EndFor
        \State \Return $(\phi^{*[S']}_i)_{i \in S'}$
    \EndIf
\EndFunction

\Statex

\State Compute $\phi^* \gets \Call{DRC}{N}$
\State \Return $\phi^*$
\end{algorithmic}
\end{algorithm}

By employing the memoisation technique, 
    the time complexity of
    Algorithm~\ref{algo-recD}
    is bounded by $O(n^2 {\cal W})$, 
    assuming sufficient memory resources, 
    where ${\cal W}
    =\# \{S' \subseteq N\mid 
     |S'|\geq 2, \sum_{i\in S'} w_i > E\}$.

Since $E=W/2$ clearly implies $\cal W= \cal L$ and $\cal W$ is non-increasing while $\cal L$ is non-decreasing in $E$, comparing $E$ with $W/2$ suffices to determine which of the two recursive algorithms yields the no larger time complexity bound.

\section{Monte Carlo Method 
}\label{section5}

In this section, we introduce a Monte Carlo method for computing the Shapley value of bankruptcy games,
    which constitutes 
    a fully polynomial-time randomized approximation scheme
    (FPRAS). 
Throughout this section, we assume that 
    $E, w_1, \ldots ,w_n$ 
    are positive rational numbers 
    satisfying $E< W=\sum_{i \in N} w_i.$
First, we introduce a simple preprocedure.
If a player $i\in N$ satisfies $E < w_i$, 
    then $w_i$ is replaced by $\min\{w_i, E\}$, 
    under which the characteristic function $v$ remains unchanged.
This transformation reduces 
    the input size of the game, 
    and preserves (or potentially enhances) 
    computational efficiency.
In this section, we assume the following.
\begin{Assumption} \label{Assumption:wi<E}
    The claim $w_i$ of each player $i\in N$ satisfies
        $0< w_i \leq E.$ 
\end{Assumption}


We now describe a Monte Carlo method, 
    which corresponds to the most simple form 
    of the algorithms 
    for calculating the Shapley-Shubik index 
    proposed by Mann and Shapley in their seminal work~\cite{Mann1960RM2651}.
For an overview of Monte Carlo methods 
    in Shapley value (and/or Shapley-Shubik index)
    computation, 
    see~\cite{bachrach2010approximating,
    Liben-Nowell2012Computing,
    Mann1960RM2651,
    ushioda2022monte} for example.

\begin{algorithm}
\caption{}\label{algo1}
\begin{algorithmic}[Algorithm 1]
\Input A positive rational vector $(E; w_1, w_2, \dots, w_n)$
    satisfying 
    $E < \sum_{i=1}^nw_i$ 
    and a positive integer $M$.
\Output A payoff vector 
        $(\phi^A_1, \phi^A_2, \ldots , \phi^A_n)$.
\State Update $w_i \gets \min \{w_i, E\}$ $(\forall i \in N)$.
    \quad \Comment{Assumption~\ref{Assumption:wi<E}}
\State Set $m \gets 0$ and  $\phi'_i \gets 0$ $(\forall i \in N)$.
\While{$m < M$}
\State  Choose $\pi \in \Pi_N$ randomly.
\State \label{step1.2} Update $\phi'_i \gets \phi'_i + \Merg (\pi ,i )$ $(\forall i \in N)$ and $m \gets m + 1$.
\EndWhile
\State Output $\phi^A_i = \frac{1}{M}\phi'_i$ for each $i \in N$.
\end{algorithmic}
\end{algorithm}

%

\noindent 
The time complexity of Algorithm~\ref{algo1} is $O(Mn)$.

We denote the vector (of random variables) 
	obtained by executing Algorithm~\ref{algo1}
	by 	
	$(\phi^A_1, \phi^A_2,\ldots ,
		\phi^A_n)$.
The following results are obtained  straightforwardly.

\begin{Theorem}\label{Theorem:FPRAS}
For each player $i \in N$, 
	$\Exp \left[ \phi^A_i \right]=\phi_i$.

For any $\varepsilon >0$ and  $0< \delta <1$,
	we have the following.

\noindent 
{\rm (1)}
If we set 
	$\displaystyle M \geq \frac{n^2 \ln (2/\delta)}{2\varepsilon^2}$,
	then each player $i \in N$ satisfies that
\[
\Prob 
	\left[ 
		\frac{| \phi^A_i-\phi_i| }{\phi_i}
            < \varepsilon  
	\right] \geq 1-\delta.
\]

\noindent 
{\rm (2)}
 If we set 	
	$\displaystyle M\geq \frac{n^2 \ln (2n/\delta)}{2\varepsilon^2}$,
	then 
\[
 \Prob 
	\left[
		\forall i \in N, 
             \frac{| \phi^A_i-\phi_i| }{\phi_i}
            < \varepsilon  
	\right] \geq 1-\delta.
\]
\end{Theorem}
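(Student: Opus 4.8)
The plan is to regard $\phi^A_i$ as the empirical mean of $M$ independent copies of the random variable $X_i := \Merg(\pi,i)$ with $\pi$ drawn uniformly from $\Pi_N$. By the definition of the Shapley value, $\Exp[X_i]=\phi_i$, so the identity $\Exp[\phi^A_i]=\phi_i$ is immediate, and the remaining task is to prove the two concentration statements.

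For the tail bounds I would invoke Hoeffding's inequality. The proof of Lemma~\ref{Theorem:w_0Monotone} already records that $\Merg(\pi,i)\in[0,w_i]$ for every $(\pi,i)$, so each sample lies in an interval of length $w_i$; writing the $M$ samples as $X_i^{(1)},\dots,X_i^{(M)}$, Hoeffding's inequality gives
\[
  \Prob\!\left[\,|\phi^A_i-\phi_i|\ge t\,\right]\le 2\exp\!\left(-\frac{2Mt^2}{w_i^2}\right)\qquad(t>0).
\]
Taking $t=\varepsilon\phi_i$ reduces statement (1) to choosing $M$ so that $2\exp(-2M\varepsilon^2\phi_i^2/w_i^2)\le\delta$, i.e.\ $M\ge \dfrac{w_i^2}{\phi_i^2}\cdot\dfrac{\ln(2/\delta)}{2\varepsilon^2}$; statement (2) will then follow from a union bound over the $n$ players, replacing $\delta$ by $\delta/n$ so that $\ln(2/(\delta/n))=\ln(2n/\delta)$.

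The crux is therefore to bound the ratio $w_i/\phi_i$ uniformly by $n$, i.e.\ to establish $\phi_i\ge w_i/n$. This is exactly where Assumption~\ref{Assumption:wi<E} ($w_i\le E$) enters: if $i$ occupies the last position of $\pi$, then $\Prec(\pi,i)=N\setminus\{i\}$ and, since $v(N)=E$ and $v(N\setminus\{i\})=\max\{0,E-w_i\}=E-w_i$,
\[
  \Merg(\pi,i)=v(N)-v(N\setminus\{i\})=E-(E-w_i)=w_i .
\]
As marginal contributions are nonnegative, restricting the defining sum of $\phi_i$ to the $(n-1)!$ permutations ending with $i$ yields $\phi_i\ge \dfrac{(n-1)!}{n!}\,w_i=\dfrac{w_i}{n}$. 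Substituting $w_i^2/\phi_i^2\le n^2$ into the bound on $M$ obtained above gives $M\ge \dfrac{n^2\ln(2/\delta)}{2\varepsilon^2}$ for (1) and $M\ge \dfrac{n^2\ln(2n/\delta)}{2\varepsilon^2}$ for (2).

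The only non-routine ingredient is this lower bound on $\phi_i$; the rest is a direct application of Hoeffding's inequality together with a union bound. One point I would state carefully is that the preprocessing step of Algorithm~\ref{algo1} enforces Assumption~\ref{Assumption:wi<E} without altering $v$ and hence without altering $\phi$, so the bound $\phi_i\ge w_i/n$ legitimately applies to the game that is actually sampled; this also makes transparent why the resulting scheme is an FPRAS, since the required $M$ is polynomial in $n$, $1/\varepsilon$ and $\ln(1/\delta)$ while each sample costs $O(n)$ time.
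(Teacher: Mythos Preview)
Your proposal is correct and follows essentially the same route as the paper: both arguments express $\phi^A_i$ as an average of $M$ i.i.d.\ samples with range $[0,w_i]$, apply Hoeffding's inequality with $t=\varepsilon\phi_i$, establish the key lower bound $\phi_i\ge w_i/n$ via the last-position argument under Assumption~\ref{Assumption:wi<E}, and obtain (2) from (1) by a union bound. Your explicit remark that the preprocessing step leaves $v$ (and hence $\phi$) unchanged is a useful clarification that the paper leaves implicit.
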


\begin{proof}
For each $m \in \{1, 2, \ldots, M\}$ and $i \in N$, 
    define the random variable 
    $\Delta^{(m)}_i = \Merg(\pi, i)$, 
    where $\pi \in \Pi_N$ is 
    the permutation sampled 
    in the $m$-th iteration of Algorithm~\ref{algo1}.
As Algorithm~\ref{algo1} chooses a permutation $\pi \in \Pi_N$ randomly, 
    it is obvious that for each player $i \in N$, 
	$\{ \Delta^{(1)}_i, \Delta^{(2)}_i, \ldots , \Delta^{(M)}_i\}$ 
	is a Bernoulli process satisfying
	$\phi^A_i=\sum_{m=1}^M \Delta^{(m)}_i/M$ and 
 	$ \Exp \left[ \phi^A_i \right]
	=\Exp \left[ \Delta^{(m)}_i \right]=\phi_i$ 
	$(\forall m \in \{1,2,\ldots ,M\})$.

The definition of Algorithm~\ref{algo1}
    directly implies that 
    $\Delta^{(m)}_i \in [0, w_i]$ $(\forall i \in N).$
In addition, 
    Assumption~\ref{Assumption:wi<E}  ensures that  
    $\Merg (\pi, i) = w_i$ holds
    for any $(\pi, i) \in \Pi_N \times N$
    such that $\pi (n)=i.$
Thus, we obtain a lower bound for 
    the Shapley value, satisfying 
\[
 \phi_i
 = \frac{1}{n!} \sum_{\pi \in \Pi_N} \Merg (\pi ,i)
 \geq \frac{1}{n!} \sum_{\pi \in \Pi_N: \pi (n)=i} 
    \Merg (\pi ,i)
 =\frac{(n-1)!}{n!} w_i 
 =\frac{w_i}{n} (>0)  \;\;\;
 (\forall i \in N).
\]

Hoeffding's inequality~\cite{hoeffding1963probability}
	implies that 
	each player $i \in N$ satisfies
\begin{eqnarray*}
\Prob \left[ \left| \phi^A_i
	- \Exp \left[ \phi^A_i \right] \right|
	\geq t \right]
&\leq 
& 2 \exp \left(
	- \frac{2M^2 t^2}{\sum_{m=1}^M ({w_i}-0)^2}
	\right)
= 2 \exp \left( \frac{-2M t^2}{{w_i}^2} \right).
\end{eqnarray*}

(1) If we set 
	$\displaystyle M \geq \frac{n^2 \ln (2/\delta)}{2\varepsilon^2}$,
	then 
\[
\begin{array}{l}
\displaystyle 
\Prob 
	\left[ 
		\frac{| \phi^A_i-\phi_i| }{\phi_i}
            < \varepsilon  
	\right] 
=
\Prob 
	\left[ 
		\left| \phi^A_i-\Exp \left[ \phi^A_i \right] \right| 
            < \varepsilon  \phi_i
	\right] 
\geq \Prob 
	\left[ 
		\left| \phi^A_i-\Exp \left[ \phi^A_i \right] \right| 
            < \varepsilon  \left( \frac{{w_i}}{n} \right) 
	\right]  \\
\displaystyle 
= 1- \Prob 
	\left[ 
		\left| \phi^A_i-\Exp \left[ \phi^A_i \right] \right| 
            \geq \left( \frac{\varepsilon {w_i}}{n} \right)
	\right]  
\geq 1- 
2 \exp \left( \frac{-2M (\varepsilon {w_i}/n)^2}{{w_i}^2} \right)\\
\displaystyle 
=1- 
2 \exp \left( -M \frac{2  \varepsilon^2}{n^2} \right)
\geq  
1- 2 \exp \left( - \ln \left( \frac{2}{\delta} \right) \right) =
  1-\delta.
 \end{array}
\] 

(2) If we set 
	$\displaystyle M\geq \frac{n^2 \ln (2n/\delta)}{2\varepsilon^2}$,
	then we have that
\[
\begin{array}{l}
\displaystyle 
\Prob \left[
				\forall i \in N, 
					\frac{\left| 
						\phi^A_i	- \varphi_i 
					\right|}{\phi_i}
					< \varepsilon 
			\right]
 = 1- \Prob 
			\left[
				\exists i \in N, 
					\left|
                        \phi^A_i-\Exp \left[ \phi^A_i \right] 
                    \right| 
					\geq \varepsilon \phi_i
			\right] \\
\displaystyle
 \geq 
 1-\sum_{i \in N} 
	\Prob 
			\left[
					\left|
                        \phi^A_i-\Exp \left[ \phi^A_i \right] 
                    \right| 
					\geq \varepsilon \phi_i
			\right]
\geq 
1-\sum_{i \in N} 
	\Prob 
			\left[
					\left|
                        \phi^A_i-\Exp \left[ \phi^A_i \right] 
                    \right| 
					\geq \varepsilon \left( \frac{{w_i}}{n} \right)
			\right] \\
\displaystyle 
\geq 1- \sum_{i=1}^n  2 \exp 
        \left( 
            \frac{-2M (\varepsilon {w_i} /n)^2 }{{w_i}^2}
        \right) 
\geq 1- \sum_{i=1}^n  2 \exp 
        \left( -M 
            \frac{2 \varepsilon^2 }{n^2}
        \right) \\
\displaystyle
 \geq 
 1- \sum_{i=1}^n  2 \exp 
	\left( 
		- \ln \left( \frac{2n}{\delta} \right) 
 	\right) 
= 1-\sum_{i=1}^n \frac{\delta}{n}=1-\delta.  
\end{array}
\]
\end{proof}

The algorithms proposed 
    in~\cite{saavedra2020systems,liben2012computing} 
    are essentially equivalent to Algorithm~\ref{algo1} 
    in structure and purpose. 
The analysis described in~\cite{saavedra2020systems}
    yields a bound of the time complexity 
    that depends on the numerical value
    of the positive integer weight~$w_i$
    (rather than on 
    the input size~$1+\lfloor \log_2 w_i \rfloor$),
    and therefore does not guarantee
   an FPRAS.
Liben-Nowell et al.~\cite{liben2012computing}
    showed that their algorithm 
    gives an FPRAS for general convex games
    using Chebyshev's inequality.

When $E \geq \tfrac{1}{2}W$, 
    fewer samples suffice.

\begin{Cor}\label{Theorem:MonteHalf}
For $E \geq \frac{1}{2}W$ 
    and any $\varepsilon>0,$ $0<\delta<1$, 
the following holds.

\noindent 
{\rm (1)}
If we set 
	$\displaystyle M \geq \frac{2 \ln (2/\delta)}{\varepsilon^2}$,
	then each player $i \in N$ satisfies that
\[
\Prob 
	\left[ 
		\frac{| \phi^A_i-\phi_i| }{\phi_i}
            < \varepsilon  
	\right] \geq 1-\delta.
\]

\noindent 
{\rm (2)}
 If we set 	
	$\displaystyle M\geq \frac{2 \ln (2n/\delta)}{\varepsilon^2}$,
	then 
\[
 \Prob 
	\left[
		\forall i \in N, 
             \frac{| \phi^A_i-\phi_i| }{\phi_i}
            < \varepsilon  
	\right] \geq 1-\delta.
\]
\end{Cor}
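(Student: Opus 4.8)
The plan is to rerun the Hoeffding estimate from the proof of Theorem~\ref{Theorem:FPRAS} verbatim, with the single change of replacing the crude lower bound $\phi_i \geq w_i/n$ by the sharper bound $\phi_i \geq w_i/2$, which becomes available precisely when $E \geq \tfrac{1}{2}W$. This replacement is exactly what removes the factor $n^2$ from the sample-size requirement.

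First I would handle the preprocessing step of Algorithm~\ref{algo1}. Setting $w_i' = \min\{w_i, E\}$ and $W' = \sum_{i\in N} w_i'$, the characteristic function $v$ (hence the true Shapley value $\phi$) is unchanged, while $W' \leq W$, so the hypothesis $E \geq \tfrac{1}{2}W$ implies $E \geq \tfrac{1}{2}W'$ and thus survives the preprocessing (one also checks $E < W'$ for $n \geq 2$, the case $n=1$ being trivial). Hence it suffices to treat a game that already satisfies Assumption~\ref{Assumption:wi<E} and $E \geq \tfrac{1}{2}W$. For such a game, Lemma~\ref{Theorem:PropDist} gives that $\BG[\tfrac{W}{2}; \Vec{w}]$ has Shapley value $(w_1/2,\ldots,w_n/2)$, and since $E \geq \tfrac{W}{2}$, the monotonicity in the estate from Lemma~\ref{Theorem:w_0Monotone} (same weight vector $\Vec{w}$) yields $\phi_i \geq w_i/2$ for every $i \in N$.

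Finally I would feed this into the Hoeffding argument: as in the proof of Theorem~\ref{Theorem:FPRAS}, $\phi^A_i$ is an average of $M$ i.i.d.\ random variables in $[0,w_i]$ with mean $\phi_i$, so $\Prob[\,|\phi^A_i - \phi_i| \geq t\,] \leq 2\exp(-2Mt^2/w_i^2)$; taking $t = \varepsilon\phi_i \geq \varepsilon w_i/2$ turns the exponent into $-2M(\varepsilon w_i/2)^2/w_i^2 = -M\varepsilon^2/2$, so the choice $M \geq 2\ln(2/\delta)/\varepsilon^2$ gives part~(1), and a union bound over the $n$ players with $M \geq 2\ln(2n/\delta)/\varepsilon^2$ gives part~(2). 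I do not expect any real obstacle here: the only mild bookkeeping is verifying that the preprocessing preserves $E \geq \tfrac12 W$ and the applicability of Lemmas~\ref{Theorem:PropDist} and~\ref{Theorem:w_0Monotone}; everything else is a direct re-substitution into the computation already carried out for Theorem~\ref{Theorem:FPRAS}.
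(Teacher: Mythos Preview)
Your proposal is correct and follows essentially the same approach as the paper: invoke Lemmas~\ref{Theorem:w_0Monotone} and~\ref{Theorem:PropDist} to obtain $\phi_i \geq w_i/2$ under the hypothesis $E \geq \tfrac{1}{2}W$, then rerun the Hoeffding computation of Theorem~\ref{Theorem:FPRAS} with this sharper lower bound in place of $\phi_i \geq w_i/n$. Your additional bookkeeping about the preprocessing step (checking that $E \geq \tfrac{1}{2}W'$ still holds after replacing $w_i$ by $\min\{w_i,E\}$) is a detail the paper glosses over, but it is handled correctly.
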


\begin{proof}
Lemmas~\ref{Theorem:w_0Monotone} 
and~\ref{Theorem:PropDist} imply that 
    $\phi_i \geq \tfrac{1}{2}w_i$ $(\forall i \in N)$.  
Applying this bound and replicating the argument 
    used in the proof of Theorem~\ref{Theorem:FPRAS} 
    yields the desired result.
\end{proof}

\section{Conclusion}\label{section6}

In this paper, we focus on the computational aspects of the Shapley value in bankruptcy games and make the following three contributions:
\begin{enumerate}
\item 
    We show that a decision problem 
    of computing the Shapley value 
    in bankruptcy games is NP-complete. 
    This finding suggests that 
    it is difficult to construct 
    a polynomial-time algorithm 
    for the exact computation of the Shapley value.
\item 
    We investigate the relationship 
    between the Shapley value of bankruptcy games 
    and the Shapley-Shubik index 
    in weighted voting games.
    Our result naturally leads to a dynamic programming technique 
    for calculating the Shapley value of a bankruptcy game 
    whose time complexity is bounded by $O(n^2 E)$ for each player, 
    assuming that $E,w_1,\ldots w_n$ are positive integers. 
    Our result improves the time complexity 
    of the algorithm proposed in~\cite{aziz2013computation}.
\item 
    We present two recursive algorithms 
    for computing the Shapley value in bankruptcy games.
    Unlike the dynamic programming approach, 
    these algorithms do not assume that 
    $E$ and $w_1,\ldots ,w_n$ are integers.
    We first describe the recursive completion method, 
    originally proposed by O'Neill.
    We also introduce our own recursive algorithm 
    based on the dual game formulation.
    We show that both approaches benefit from memoisation, 
    improving computational efficiency.
    Their time complexities are bounded 
    by $O(n^2 {\cal L})$ and $O(n^2 {\cal W})$, 
    where 
    ${\cal L}
    =\# \{S'' \subseteq N \mid 
        |S''| \leq n-2, \sum_{k\in S''} w_k < E\}$ and   
    ${\cal W}
    =\# \{S' \subseteq N\mid 
     |S'|\geq 2, \sum_{i\in S'} w_i > E\}$, respectively.
\item 
    We describe a Fully Polynomial-Time Randomized Approximation Scheme 
    (FPRAS) for computing the Shapley value.
    We have also shown that when $E \geq \tfrac{1}{2}W$, 
    the required number of samples
    grows sublinearly with respect to $n$
    (number of players).  
\end{enumerate}

\subsection*{Future Work}
An interesting direction for future work is to develop an FPRAS for computing the Shapley value in bankruptcy games that also satisfies monotonicity, 
i.e., ensuring that $w_i \leq w_j$ implies $\phi_i^A \leq \phi_j^A$, which the Monte Carlo method described in Section~\ref{section5} does not always guarantee.
This line of research is motivated by the construction 
    of an FPRAS for the Shapley-Shubik index in weighted voting games by Ushioda et al.~\cite{ushioda2022monte}.

\bibliography{refer}

\end{document}